\newtheorem{thm}{\bf Theorem}
\begin{document}
\title{How to Share: Balancing Layer and Chain Sharing in Industrial Microservice Deployment}

\author{Yuxiang~Liu,
        Bo~Yang,~\IEEEmembership{Senior Member,~IEEE,}
        Yu~Wu,~\IEEEmembership{Student Member,~IEEE,}
        Cailian~Chen,~\IEEEmembership{Member,~IEEE,}
        and~Xinping~Guan,~\IEEEmembership{Fellow,~IEEE}% <-this % stops a space
\thanks{ Y. Liu, B. Yang (Corresponding author), Y. Wu, C. Chen, and X. Guan are with the Department of Automation, Shanghai Jiao Tong University, Shanghai 200240, China; Key Laboratory of System Control and Information Processing, Ministry of Education of China, Shanghai 200240, China; Shanghai Engineering Research Center of Intelligent Control and Management, Shanghai 200240, China (e-mail: liu953973860@sjtu.edu.cn; bo.yang@sjtu.edu.cn; 5wuuy5@sjtu.edu.cn; cailianchen@sjtu.edu.cn; xpguan@sjtu.edu.cn).}% <-this % stops a space
}

% The paper headers
\markboth{Journal of \LaTeX\ Class Files,~Vol.~14, No.~8, August~2015}%
{Liu \MakeLowercase{\textit{et al.}}: Bare Demo of IEEEtran.cls for IEEE Journals}

% make the title area
\maketitle

% As a general rule, do not put math, special symbols or citations
% in the abstract or keywords.
\begin{abstract}
  With the rapid development of smart manufacturing, edge computing-oriented microservice platforms are emerging as an important part of production control. In the containerized deployment of microservices, layer sharing can reduce the huge bandwidth consumption caused by image pulling, and chain sharing can reduce communication overhead caused by communication between microservices. The two sharing methods use the characteristics of each microservice to share resources during deployment. However, due to the limited resources of edge servers, it is difficult to meet the optimization goals of the two methods at the same time. Therefore, it is of critical importance to realize the improvement of service response efficiency by balancing the two sharing methods. This paper studies the optimal microservice deployment strategy that can balance layer sharing and chain sharing of microservices. We build a problem that minimizes microservice image pull delay and communication overhead and transform the problem into a linearly constrained integer quadratic programming problem through model reconstruction. A deployment strategy is obtained through the successive convex approximation (SCA) method. Experimental results show that the proposed deployment strategy can balance the two resource sharing methods. When the two sharing methods are equally considered, the average image pull delay can be reduced to 65\% of the baseline, and the average communication overhead can be reduced to 30\% of the baseline.
\end{abstract}

% Note that keywords are not normally used for peerreview papers.
\begin{IEEEkeywords}
Industrial Internet of Things (IIoT), microservice deployment, layer sharing, chain sharing.
\end{IEEEkeywords}

\IEEEpeerreviewmaketitle

\section{Introduction}
\IEEEPARstart{W}{ith} the rapid development of smart manufacturing and flexible production, the flexibility of industrial production has been greatly enhanced \cite{wangLowlatencyInteroperableIndustrial2020, chenResourceRecommendationModel2022}. Industrial software needs to quickly redistribute and adjust production processes according to changes of orders, which has higher requirements for flexibility and scalability of industrial software \cite{mendoncaDevelopingSelfAdaptiveMicroservice2021, nieHypergraphicalRealtimeMultiRobot2021, hazraCooperativeTransmissionScheduling2022}. Traditional industrial software adopts a monolithic service architecture. The high coupling and occupancy rate within the service will increase the complexity of the whole system. Its scalability, stability, and fault tolerance are difficult to meet the requirements of smart manufacturing. Therefore, the industrial software architecture based on microservices has been widely concerned \cite{thramboulidisCyberphysicalMicroservicesIoTbased2018, yangMIRASModelbasedReinforcement2019}. Through the microservice architecture, a complete service can be split into multiple loosely coupled microservices. Different microservices are logically independent and have a high degree of flexibility, scalability, and fault tolerance, which can well adapt to the requirements of smart manufacturing.

% \IEEEPARstart{W}{ith} the rapid development of smart manufacturing and flexible production, the flexibility of industrial production has been greatly enhanced \cite{wangLowlatencyInteroperableIndustrial2020, chenResourceRecommendationModel2022}. Industrial software needs to quickly redistribute and adjust production processes according to changes of orders, which has higher requirements for flexibility and scalability of industrial software \cite{mendoncaDevelopingSelfAdaptiveMicroservice2021, nieHypergraphicalRealtimeMultiRobot2021}. Traditional industrial software adopts a monolithic service architecture. The high coupling and occupancy rate within the service will increase the complexity of the whole system. Its scalability, stability, and fault tolerance are difficult to meet the requirements of smart manufacturing. Therefore, the industrial software architecture based on microservices has been widely concerned \cite{thramboulidisCyberphysicalMicroservicesIoTbased2018, yangMIRASModelbasedReinforcement2019}. Through the microservice architecture, a complete service can be split into multiple loosely coupled microservices. These microservices cooperate with each other to complete production tasks. Different microservices are logically independent and have a high degree of flexibility, scalability and fault tolerance, which can well adapt to the requirements of smart manufacturing.

To meet the high requirements of computation-intensive tasks for real-time performance and service efficiency in smart manufacturing, edge computing-oriented microservice platforms are emerging \cite{tianDIMADistributedCooperative2021, baoPerformanceModelingWorkflow2019,jianCloudEdgebasedTwolevel2021,shiEdgeComputingVision2016,zhangFengHuoLunFederatedLearning2020}. At present, container technologies represented by Docker \cite{merkelDockerLightweightLinux2014} and container orchestration tools represented by Kubernetes \cite{burnsBorgOmegaKubernetes2016} are becoming mainstream solutions for microservice deployment and maintenance on edge platforms. According to different service requests and deployment strategies, each microservice which is packaged into a Docker image can be deployed to edge servers through container orchestration tools.

% To meet the high requirements of computation-intensive tasks for real-time performance and service efficiency in smart manufacturing, edge computing-oriented microservice platforms are emerging \cite{tianDIMADistributedCooperative2021, baoPerformanceModelingWorkflow2019}. Edge computing is a new computing paradigm that provides low-latency computing services through edge servers deployed near devices \cite{jianCloudEdgebasedTwolevel2021,shiEdgeComputingVision2016}, which can fully utilize the computing resources of terminal devices, edge nodes, and cloud servers \cite{zhangFengHuoLunFederatedLearning2020}. At present, container technologies represented by Docker \cite{merkelDockerLightweightLinux2014} and container orchestration tools represented by Kubernetes \cite{burnsBorgOmegaKubernetes2016} are becoming mainstream solutions for microservice deployment and maintenance on edge platforms. According to different service requests and deployment strategies, each microservice can be packaged into a Docker image and deployed to edge servers through container orchestration tools.

In the containerized deployment of microservices, service efficiency is an important indicator for evaluating the quality of the deployment solution. Service efficiency is mainly affected by two aspects. One is the startup time of microservices. It mainly depends on the pull delay of Docker images which are stored in the cloud through different image layers \cite{guExploringLayeredContainer2021}. When a microservice needs to be provided locally, the edge server will pull a non-local container image containing all required layers from the cloud. Due to limited network bandwidth, image pulls incur a corresponding downlink delay. A comprehensive research shows that with a bandwidth of 100Mbps, the average startup time of a single image is about 20.7 seconds, while the average image pull delay is about 15.8 seconds, accounting for 76.6\% of the average startup time \cite{guNDockerNVMHDDHybrid2019}. Image pull delay has become a non-negligible factor affecting container startup time, which in turn affects the efficiency of service response. The other is the communication overhead between microservices. It depends on the amount of data communicated between microservices. An industrial application can be completed by multiple microservices deployed on one or more edge servers \cite{wangMPCSMMicroservicePlacement2021}. These microservices can be called microservice chains, and there will be frequent data exchanges between microservices in the same microservice chain \cite{lvMicroserviceDeploymentEdge2022}. A large amount of data transmission between microservices will cause high transmission delay, which will affect the service response efficiency.

Due to the above two aspects, it is very important to improve service efficiency through resource sharing. There are two types of resource sharing strategies for the improvement of service efficiency. One of the strategies for resource sharing is layer sharing \cite{guExploringLayeredContainer2021}. Docker natively supports the sharing of layers. If the microservices deployed on the same edge server use the same image layer, the layer will not be pulled repeatedly when pulling images. This layer can be shared by all microservices on the server. The image pull delay can be effectively reduced by layer sharing, thereby improving the startup speed and service response efficiency of microservices. The other strategy for resource sharing is chain sharing \cite{lvMicroserviceDeploymentEdge2022, yuJointOptimizationService2019}, which can be defined as the data sharing of microservices deployed on the same server. In the microservice chain, there is frequent data transfer between two adjacent microservices. If two microservices are deployed on the same server, the data can be directly accessed through chain sharing by the next microservice without multi-hop transmission of data. The delay and packet loss caused by data transmission can be reduced by chain sharing.

However, due to the limited resources of edge servers, it is impossible for all microservices to be deployed on the same edge server. Therefore, it is necessary to find an optimal microservice deployment strategy for the trade-off between layer sharing and chain sharing. Besides service efficiency, due to the limited resources of edge servers, the microservice deployment strategy can not make full use of different resources (such as computing and storage resources) at the same time, resulting in idle computing resources. Therefore, a method is also needed to reasonably allocate resources to different microservices deployed on a server and maximize the utilization of resources.

% However, due to the limited resources of edge servers, it is impossible for all microservices to be deployed on the same edge server. Therefore, it is necessary to find an optimal microservice deployment strategy for the trade-off between layer sharing and chain sharing. Besides service efficiency, the existing resources of the server also need to be reasonably allocated to maximize the use of computing resources. Each microservice will request certain resources. Due to the limited resources of edge servers, the microservice deployment strategy can not make full use of different resources (such as computing and storage resources) at the same time, resulting in idle computing resources. Therefore, a method is also needed to reasonably allocate resources to different microservices deployed on a server and maximize the utilization of resources.

Aiming at resource sharing and maximizing resource utilization problems among microservices, the deployment of microservices mainly faces the following difficulties. 1) How to model the layered structure of the microservice image to accurately describe the relationship between the microservice image and the container layer. 2) How to describe the chain structure of microservices and the communication between microservices. 3) How to balance layer sharing and chain sharing to establish an optimization problem to achieve the best deployment strategy. 4) How to reallocate resources to microservices deployed on edge servers to make full use of computing resources. In this paper, we study the microservice deployment problem considering microservice layer sharing and chain sharing. The problem is modeled as an integer programming problem that minimizes image pull delay and communication overhead. Based on this problem, a microservice deployment strategy and resource redistribution scheme are proposed. The main contributions of this work are as follows: 

\begin{enumerate}[1)]
  \item We describe the layered structure and chain structure of microservices through the same model. An integer programming problem is established to minimize the image pull delay and communication overhead.
  \item Through model reconstruction, we prove that the integer programming problem can be transformed into an integer quadratic programming problem with linear constraints. The optimal solution is obtained by using the successive convex approximation (SCA) method. This method can effectively balance the image pull delay and communication overhead.
  \item A resource redistribution algorithm for edge servers is proposed to make full use of idle computing resources.
  \item Through experiments, the results are evaluated in multiple dimensions, such as image pull delay and inter-service communication overhead. These experiments demonstrate the effectiveness of the proposed method.
\end{enumerate}

The remainder of this paper is organized as follows. Sec. \ref{Related Works} briefly reviews the related literature. In Sec. \ref{Modeling}, the layered structure and chain structure of the system are modeled, and the problem formulation is given. Sec. \ref{Solution1} solves the proposed problem. Sec. \ref{reallocation} proposes a resource redistribution algorithm for edge servers. Sec. \ref{performance} evaluates the results of the proposed method. Sec. \ref{discussion} discusses the limitations and future work. Sec. \ref{conclusion} concludes the paper.

\section{Related Works} \label{Related Works}

In this section, we discuss current research on the deployment of microservices.

In recent years, optimizing the cost and improving microservice response efficiency have received wide attention. Herrera \emph{et al.} \cite{herreraOptimizingResponseTime2021a} designed a distributed microservice deployment framework named DADO to optimize the response time of microservices. Deng \emph{et al.} \cite{dengOptimalApplicationDeployment2021} and Chen \emph{et al.} \cite{chenDynamicServiceMigration2022} proposed algorithms to solve the cost-aware microservice deployment problem, they considered application deployment cost and service migration cost, respectively. Fadda \emph{et al.} \cite{faddaMonitoringAwareOptimalDeployment2021} provided an approach for supporting the deployment of microservices in multi-cloud environments to optimize the quality and cost. Zhao \emph{et al.} \cite{zhaoCostEfficientEdgeIntelligent2020a} developed a cost-aware elastic microservice deployment algorithm to solve the container-based microservice deployment problem. The above researchers have conducted sufficient research on service response efficiency and service quality. However, these studies do not consider the characteristics of microservices, such as the chain structure of multiple microservices and the layered structure due to containerized deployment.

% Samanta \emph{et al.} \cite{samantaBattleMicroservicesLatencyOptimal2019} proposed a novel distributed and latency-optimized microservice scheduling mechanism for edge computing platforms to achieve minimal service latency.

From the perspective of the chain structure, deployment strategies become more complex due to the dependencies between microservices. In general, a microservice chain can be modeled as a directed acyclic graph \cite{faticantiThroughputAwarePartitioningPlacement2020, josephIntMADynamicInteractionaware2020}. Wang \emph{et al.} \cite{wangMPCSMMicroservicePlacement2021} and Li \emph{et al.} \cite{liOnlineReconfigurationLatencyAware2021} proposed algorithms to solve latency-aware microservice deployment problems. Armani \emph{et al.} \cite{armaniCostEffectiveWorkloadAllocation2021} proposed a cost-effective workload distribution strategy for microservice-based applications considering fault tolerance and load balancing of microservice chains. Sasabe \emph{et al.} \cite{sasabeCapacitatedShortestPath2021} considered the service chaining and function placement problems to optimize the total delay. Lv \emph{et al.} \cite{lvMicroserviceDeploymentEdge2022} considered the containerized deployment of microservices, and a chain sharing deployment strategy is proposed to minimize the communication overhead. The above research focuses on the chain structure and chain sharing of microservices, but it does not take into account the layered structure in containerized deployments of microservices.

% Lin \emph{et al.} \cite{linAntColonyAlgorithm2019} established a container-based microservice scheduling multi-objective optimization model and used an ant colony algorithm to solve the scheduling problem.

\begin{figure*}[t]
  \centering
  \includegraphics[width=5.5in]{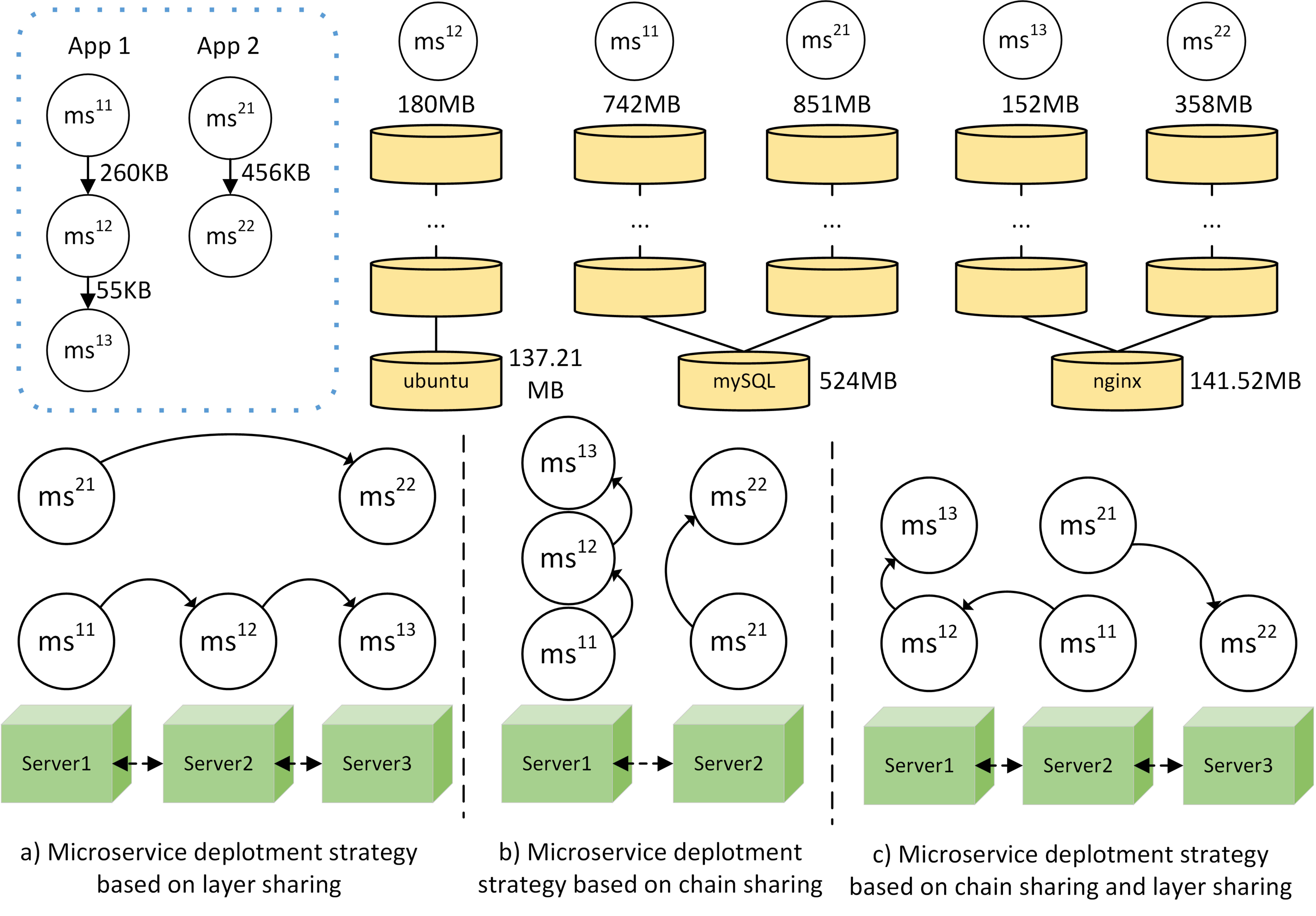}
  \caption{An example of layer sharing and chain sharing deployment strategies}
  \label{model}
\end{figure*}

For the layered structure, researchers have focused on how to reduce the latency of image pulling by reducing image size or utilizing layer sharing of images. Lou \emph{et al.} \cite{louEfficientContainerAssignment2022} considered layer sharing among images and proposed a layer-aware scheduling algorithm. Gu \emph{et al.} \cite{guLayerAwareMicroservice2021} designed a microservice deployment and request scheduling strategy based on layer sharing and used an iterative greedy algorithm to obtain the optimal strategy to improve the throughput of microservices. Gu \emph{et al.} \cite{guLayerawareCollaborativeMicroservice2022} also investigated the problem of how to collaboratively deploy microservices by incorporating both intra-server and inter-server layer sharing to maximize the edge throughput. The above research fully considers the layer sharing strategy in the containerized deployment of microservices, but does not consider the sharing strategy in the presence of the chain structure between microservices.

Although the existing schemes have considered optimizing the efficiency and cost of microservice deployment in terms of the layered structure and chain structure, respectively, there are still some challenging problems to be solved. First, in complex intelligent manufacturing scenarios, the layered structure and chain structure usually coexist. In this case, the deployment strategies of layer sharing and chain sharing will affect each other, which further increases the difficulty of finding an effective deployment strategy. Secondly, an uneven deployment strategy will lead to idle server resources due to inconsistent server resources required by different microservices. In order to solve these problems, we propose a microservice deployment scheme that comprehensively considers layer sharing and chain sharing. It can both reduce the delay of image pulling and the communication overhead. The idle server resources can also be fully utilized. Our scheme can effectively improve the operating efficiency of microservices. To the best of our knowledge, there is no research that comprehensively considers the two sharing methods.

\section{System Modeling and Problem Formulation} \label{Modeling}

\subsection{A simple example}

First, we show the two strategies of layer sharing and chain sharing through a simple microservice deployment model. As shown in Fig. \ref{model}, we consider two applications composed of two and three microservices, respectively. We denote the $i$th microservice in application $k$ as $ms^{ki}$. Each microservice image consists of a different number of image layers. The bandwidth between the three servers and the cloud server is 120 MB/s, and the two adjacent servers can be reached with a single hop. Under the layer sharing deployment strategy, the same image layer on the same edge server can be shared. Therefore, the size of the image layer to be pulled is 1617.48 MB, and the total download time is 13.479 seconds. However, the total communication data is 1031 KB because of communication between servers. Under the chain sharing deployment strategy, the total communication data is 0 KB since the microservices in the same chain are all deployed on the same server. The size of the image layer to be pulled is 2283 MB because there is no layer sharing, and the total pull delay is 19.025 seconds. When considering both chain sharing and layer sharing, we can get the result shown in Fig. \ref{model}(c). The size of the image layer to be pulled is 1758 MB, the total download time is 14.65 seconds, and the total communication data is 315 KB. These data can be found in Table \ref{example}. It can be seen that different deployment strategies have a significant impact on image pull delay and communication overhead. If layer sharing and chain sharing can be both considered, we will get low image pull delay and low communication overhead at the same time.
%% 加一个表格用于表示数据

\begin{table}[tbp]
	\centering 
	\caption{Comparison about layer sharing and chain sharing}
	\label{example}
	%字母的个数对应列数，|代表分割线
	% l代表左对齐，c代表居中，r代表右对齐
	\begin{tabular}{p{52pt}|p{65pt}|p{95pt}} \hline \hline
    \rule{0pt}{8pt} \quad & \textbf{image pull delay} & \textbf{communication overhead}\\ \hline
    \rule{0pt}{8pt} \textbf{layer sharing} & 13.479s & 1031 KB \\ \hline
    \rule{0pt}{8pt} \textbf{chain sharing} & 19.025s & 0 KB \\ \hline
    \rule{0pt}{8pt} \textbf{both} & 14.65s & 315 KB \\ \hline
    \hline
	\end{tabular}
\end{table}

\subsection{System model}

We consider an intelligent manufacturing system, which has $M$ production devices and $N$ edge servers. We define device set as $\mathbb{M}=\{1,2,\cdots,M\}$ and server set as $\mathbb{N}=\ \{1,2,\cdots,N\}$. A cloud server is deployed at the remote end to store the microservice images. Each production device is connected to the nearest edge server. Each edge server has limited computing and storage resources, and a certain number of microservices can be deployed on it. The computing and storage resources of edge server $n$ are denoted as $C_n^C$ and $C_n^S$, and the bandwidth between cloud server and edge server $n$ is $b_n^{cloud}$.

Suppose there are several industrial applications, and application set is defined as $\mathbb{K}=\{1,2,\cdots,K\}$. Each application is composed of multiple microservices. The microservice set in the $k$th application is $\mathbb{A}_k=\{1,2,\cdots,A_k\}$, where $A_k$ is the amount of microservices in the $k$th application. Each application can handle service requests from production device. We use $ms^{ki}$ to denote the $i$th microservice in application $k$, and $u^{ki}$ to denote the computing resources requested by microservice $ms^{ki}$.

All microservice images are stored in the microservice image library of the cloud server, and are pulled by the edge server according to the deployed microservices. Every microservice image consists of some shareable layers and some non-shareable layers. We use set $\mathbb{L}=\{1,2,\cdots,L\}$ to represent all layers of different size and $S^l\in\ \mathbb{R}^+$ to represent the size of layer $l\ \in\ \mathbb{L}$. In this way, each microservice can be composed of one or more layers in $\mathbb{L}$, and $E^{kil}\in\{0,1\}$ can be used to indicate whether $ms^{ki}$ contains the $l$th layer. $E^{kil}=1$ represents that $ms^{ki}$ contains the $l$th layer.

Each server will receive different service requests and data. If the expected microservice is deployed on the server at this time, the service request can be processed directly. If the expected microservice is not deployed on the server, the request and data need to be transmitted to another edge server through multi-hop transmission. Due to the different geographical locations, the hops in communication between different servers is also different. We define $D_{nn^\prime}$ as the hops of requests or data transmitted from server $n$ to server $n^\prime$, which can be obtained by the shortest communication path between the two servers. It is obvious that $D_{nn^\prime} = D_{n^\prime n}$, $D_{nn}=0$. We can use a matrix $\mathbf{D}$ to represent the multi-hop connection between all servers. The notations and variables commonly used in this paper are summarized in Table \ref{notation}.

\begin{equation}
  \mathbf{D}=\begin{bmatrix}0&D_{12}&D_{13}&\cdots&D_{1N}\\D_{21}&0&D_{23}&\cdots&D_{2N}\\D_{31}&D_{32}&0&\cdots&D_{3N}\\\vdots&\vdots&\vdots&\ddots&\vdots\\D_{N1}&D_{N2}&D_{N3}&\cdots&0\\\end{bmatrix}\nonumber
\end{equation}

\begin{table}[tbp]
	\centering 
	\caption{Commonly used notations and variables}
	\label{notation}
	%字母的个数对应列数，|代表分割线
	% l代表左对齐，c代表居中，r代表右对齐
	\begin{tabular}{p{30pt}|p{180pt}} \hline \hline
    \rule{0pt}{8pt}\textbf{Symbol} & \textbf{Description}\\ \hline
    \rule{0pt}{8pt} $M,\mathbb{M}$ & devices and the set of devices \\ \hline
    \rule{0pt}{8pt} $N,\mathbb{N}$ & servers and the set of servers \\ \hline
    \rule{0pt}{8pt} $K,\mathbb{K}$ & applications and the set of applications \\ \hline
    \rule{0pt}{8pt} $L,\mathbb{L}$ & layers and the set of layers \\ \hline
    \rule{0pt}{8pt} $A_k,\mathbb{A}_k$ &microservices in the $k$th application and the set of microservices \\ \hline
    \rule{0pt}{8pt} $C_n^C$ &the computing resources of edge server $n$ \\ \hline
    \rule{0pt}{8pt} $C_n^S$ &the storage resources of edge server $n$ \\ \hline
    \rule{0pt}{8pt} $b_n^{cloud}$ &the bandwidth between cloud server and edge server $n$ \\ \hline
    \rule{0pt}{8pt} $ms^{ki}$ &the $i$th microservice in application $k$ \\ \hline
    \rule{0pt}{8pt} $u^{ki}$ &the computing resources requested by microservice $ms^{ki}$ \\ \hline
    \rule{0pt}{8pt} $S^l$ &the size of layer $l$ \\ \hline
    \rule{0pt}{8pt} $E^{kil}$ &whether $ms^{ki}$ contains the $l$th layer \\ \hline
    \rule{0pt}{8pt} $D_{nn^\prime}$ &the hops of requests or data transmitted from server $n$ to server $n^\prime$ \\ \hline
    \rule{0pt}{8pt} $x_n^{ki}$ &whether $ms^{ki}$ is deployed in server $n$ \\ \hline
    \rule{0pt}{8pt} $d_n^l$ &whether the layer $l$ is pulled to edge server $n$ \\ \hline
    \hline 
	\end{tabular}
\end{table}
\subsection{Problem Formulation}

\subsubsection{Microservice deployment and layer sharing}

We define $x_n^{ki}\in\{0,1\}$ to represent the deployment of $ms^{ki}$, and $x_n^{ki} = 1$ to represent that the microservice is deployed on the edge server $n$, otherwise not. Due to the layered structure of microservices, once a microservice is deployed on edge server $n$, all layers contained in the microservice image need to be pulled to server $n$. We use the variable $d_n^l\in\{0,1\}$ to represent whether the layer $l$ is pulled to edge server $n$, and $d_n^l=1$ indicates that the $l$th layer needs to be downloaded to the edge server $n$, otherwise not.

Since each microservice can only be deployed on a unique server, we can get the following constraints:

\begin{equation}
  \sum_{n\in\mathbb{N}} x_n^{ki}=1,\forall k\in\mathbb{K},\forall i\in\mathbb{A}_k \label{s1}
\end{equation}

If microservices deployed on the same server can share the same layer, the layer only needs to be downloaded once. Therefore, $d_n^l$ and $x_n^{ki}$ satisfy the following constraints:

\begin{equation}
  d_n^l=\min{\{\sum_{k\in\mathbb{K}}\sum_{i\in\mathbb{A}_k} x_n^{ki}}E^{kil},1\},\forall l\in\mathbb{L},\forall n\in\mathbb{N} \label{s2}
\end{equation}

Due to the limited storage resources, the layer size of the deployed microservices needs to be smaller than the storage resources of edge servers. So we can get the following constraints:

\begin{equation}
  \sum_{l\in\mathbb{L}} d_n^lS^l\le C_n^S,\forall n\in\mathbb{N} \label{s3}
\end{equation}

Due to the limited computing resources, the total computing resource of all microservices deployed on a server needs to be less than the computing resource of the server. So we can get the following constraints:

\begin{equation}
  \sum_{k\in\mathbb{K}}\sum_{i\in\mathbb{A}_k} x_n^{ki}u^{ki}\le C_n^C,\forall n\in\mathbb{N} \label{s4}
\end{equation}

For each server, all layers deployed on the server need to be pulled from the cloud. The image pull delay of server $n$ can be expressed as follows:

\begin{equation}
  T_n=\frac{\sum_{l\in \mathbb{L}} d_n^lS^l}{b_n^{cloud}}
\end{equation}

\subsubsection{Communication overhead and chain sharing}

We model the microservice chain as a directed weighted acyclic graph to reveal the impact of communication data on the deployment of microservices. Taking an application consisting of four microservices as an example, the modeled directed weighted graph is shown in Fig. \ref{chain}. We use the interaction weight $w_{ij}^k$ to represent the size of the communication traffic between each two microservices $ms^{ki}$ and $ms^{kj}$.

\begin{figure}[htbp]
  \centering
  \includegraphics[width=2in]{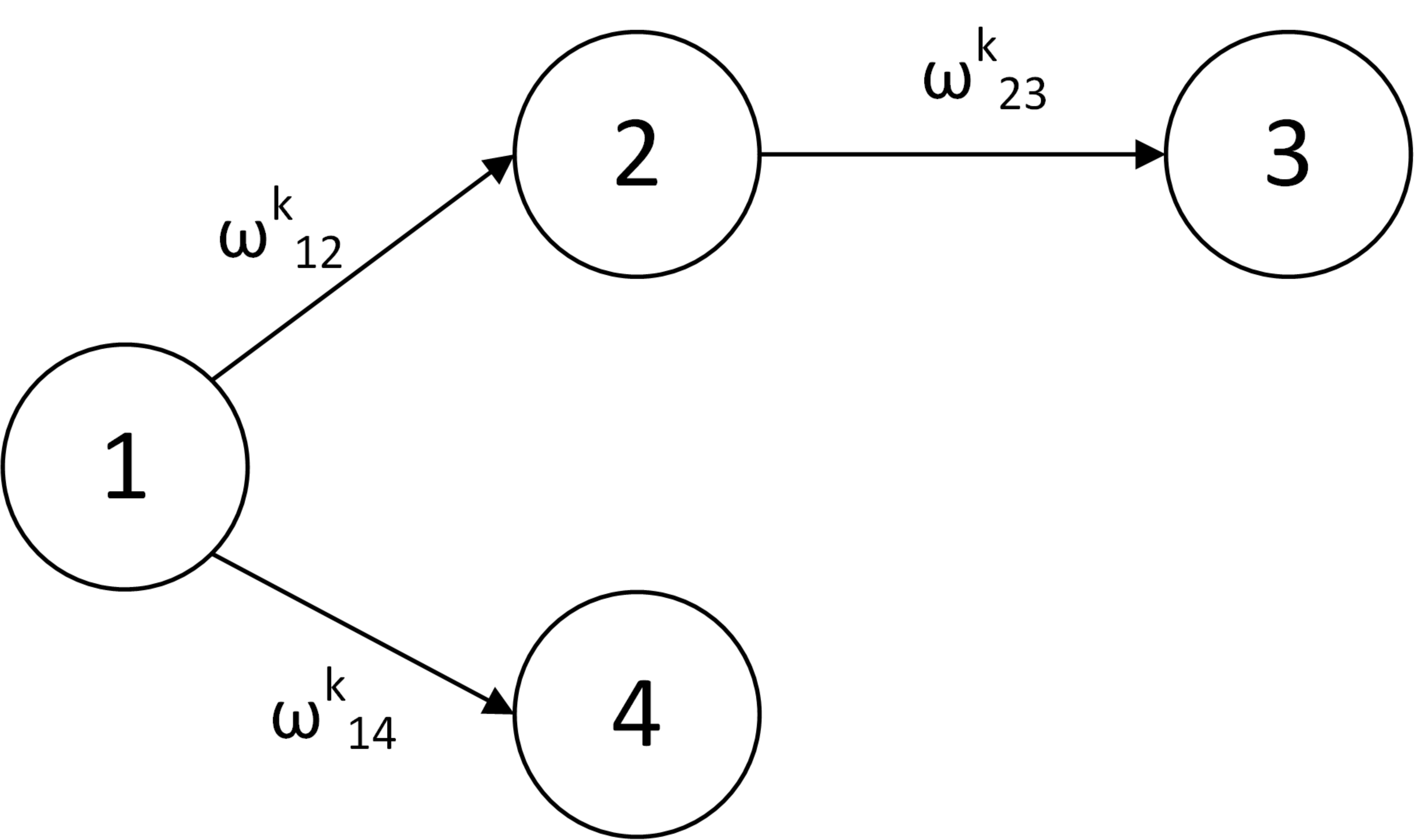}
  \caption{Directed weighted graph for application $k$}
  \label{chain}
\end{figure}

The interaction graph can be written in the form of a matrix. For application $k$, its interaction matrix is defined as follows:

$$
\mathbf{w}^k=\begin{bmatrix}w_{11}^k&\cdots&w_{1A_k}^k\\\vdots&\ddots&\vdots\\w_{A_k1}^k&\cdots&w_{A_k,A_k}^k\\\end{bmatrix}
$$
where $w_{ij}^k$ is non-zero value only when $ms^k_i$ and $ms^k_j$ are connected. For example, $\mathbf{w}^k$ of the microservice chain shown in Fig. \ref{chain} can be defined as follows:

$$
\mathbf{w}^k=\begin{bmatrix}0&w_{12}^k&0&w_{14}^k\\0&0&w_{23}^k&0\\0&0&0&0\\0&0&0&0\\\end{bmatrix}
$$

Based on the interaction graph, we can calculate the communication cost. Multi-hop data transmission between two adjacent microservices is not required if they are deployed on the same edge server. To calculate the total amount of data transferred between microservices in an application, we first need to find the hops between the servers where any two microservices are deployed. For the servers where any two microservices $ms^{ki}$ and $ms^{kj}$ are deployed, we define $Hop\left(k,i,j\right)=\sum_{n\in\mathbb{N}}\sum_{n^\prime\in\mathbb{N}} x_n^{ki}x_{ n^\prime}^{kj}D_{nn^\prime}$ to calculate the hops. For any application $k$, its communication overhead can be expressed as follows:

\begin{align}
  R^k&=\sum_{i\in\mathbb{A}_k}\sum_{j\in\mathbb{A}_k}{w_{ij}^kHop\left(k,i,j\right)} \nonumber \\
  &= \sum_{i\in\mathbb{A}_k}\sum_{j\in\mathbb{A}_k}\left(w_{ij}^k\sum_{n\in N}\sum_{n^\prime\in N} x_n^{ki}x_{n^\prime}^{kj}D_{nn^\prime}\right)
\end{align}

\subsubsection{The virtual microservice}

Each application $k$ originates from a service request on a production device. Each generated service request is transmitted via the network to the edge server closest to the production device at first. We define the number of the source device for the application $k$ as $source^k$. For each $source^k$, we find its directly connected edge server $N^k$ and define a virtual initial microservice $ms^{k0}$ to describe the impact of request generation location on microservice deployment. We use $ms^{k0}$ to denote the service requests generated on device $source^k$. The microservice set in the $k$th application is modified as $\mathbb{A}_k=\{0,1,2,\cdots,A_k\}$. Therefore, we should add the virtual initial microservice to the  interaction diagram. Fig. \ref{chain} can be modified as follows:

\begin{figure}[htbp]
  \centering
  \includegraphics[width=2.5in]{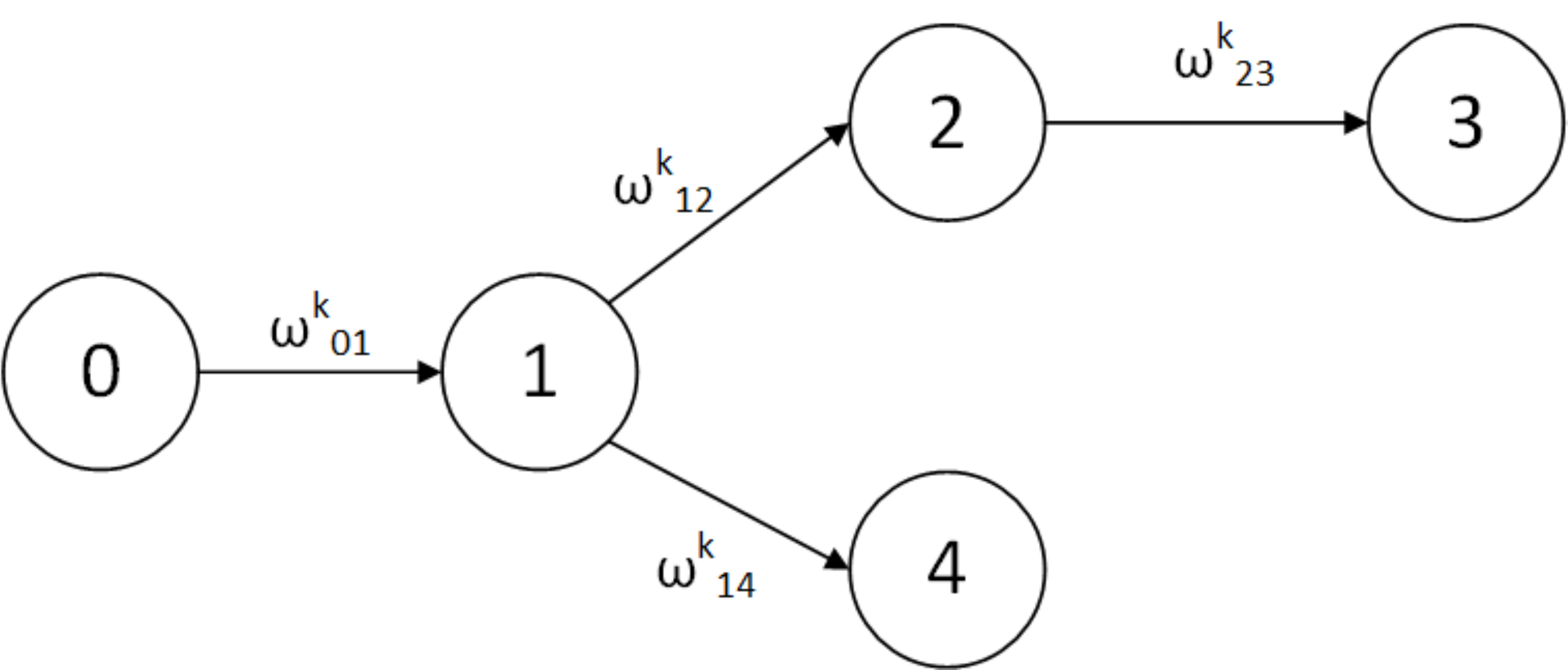}
  \caption{Modified directed weighted graph for application $k$}
  \label{chain_ini}
\end{figure}

The microservice $ms^{k0}$ does not actually exist, so its required computing resource is $u^{k0} = 0$ and does not contain any layers. When its deployment location is fixed, we can get the following constraints:

\begin{equation}
  x_{N^k}^{k0}=1,\forall k\in\mathbb{K} \label{s5}
\end{equation}

\subsubsection{Image pull delay and communication overhead minimization problem}

The goal of microservice deployment is to minimize image pull delay and communication overhead under the constraints of device resources and service characteristics. The optimization problem can be expressed as follows:

\begin{align}
  %\text{P1:} \quad \min_{x,d}{U} &=\sum_{n\in\mathbb{N}} T_n,\sum_{k\in\mathbb{K}} R^k \\
  \text{P1:} \quad \min_{x,d} \quad & T,R \\
  \text{s.t.} \quad &(\ref{s1}),(\ref{s2}),(\ref{s3}),(\ref{s4}),(\ref{s5}) \nonumber \\
  &x_n^{ki},d_n^l\in\{0,1\} \label{s6}
\end{align}
where $T=\sum_{n\in\mathbb{N}} T_n$ is the total image pull delay. $R = \sum_{k\in\mathbb{K}} R^k$ is the total communication overhead. This problem is a multi-objective optimization problem and there is a multiplicative form of variables in $R^k$. Therefore, the problem is difficult to solve. In next section, we will transform the problem to a single-objective optimization problem and give a solution.

\section{Microservice Deployment Scheme based on SCA} \label{Solution1}

\subsection{Problem transformation} 

We vectorize all variables through model reconstruction to make the problem clearer and easier to solve. Then we convert all constraints to linear constraints. Finally, the problem is transformed into a single-objective integer quadratic programming problem through an additive weighted model.

\subsubsection{Image pull delay} \label{transform}

Consider the first part of problem $\text{P1}$:

\begin{equation}
  T=\sum_{n\in\mathbb{N}}\frac{\sum_{l\in L} d_n^lS^l}{b_n^{cloud}}
\end{equation}
which is a linear form. We define $\mathbf{d}=\left[\mathbf{d}_1^T,\cdots,\mathbf{d}_N^T\right]^T$, $\mathbf{S}=\left[S^1,\cdots,S^L\right]^T$, and $\mathbf{M}=\left[\frac{\mathbf{S}^T}{b_1^{cloud}},\cdots,\frac{\mathbf{S}^T}{b_N^{cloud}}\right]$. Then the calculation of the total pull delay can be converted to

\begin{equation}
  T=\mathbf{Md}
\end{equation}

\subsubsection{Communication overhead}

Consider the second part of problem $\text{P1}$:

\begin{equation}
  R=\sum_{k\in\mathbb{K}}\sum_{i\in\mathbb{A}_k}\sum_{j\in\mathbb{A}_k}\left(w_{ij}^k\sum_{n\in N}\sum_{n^\prime\in N} x_n^{ki}x_{n^\prime}^{kj}D_{nn^\prime}\right)
\end{equation}

We can also define $\mathbf{x}^{ki}=\left[x_1^{ki},x_2^{ki},\cdots,x_N^{ki}\right]^T$, $\mathbf{x}^k=\left[\left(\mathbf{x}^{k1}\right)^T,\cdots,\left(\mathbf{x}^{kA_k}\right)^T\right]^T$, and $\mathbf{x}=\left[\left(\mathbf{x}^1\right)^T,\cdots,\left(\mathbf{x}^K\right)^T\right]^T$. We can get 

\begin{equation}
  \sum_{n\in N}\sum_{n^\prime\in N} x_n^{ki}x_{n^\prime}^{kj}D_{nn^\prime}=\left(\mathbf{x}^{kj}\right)^T\mathbf{D}\mathbf{x}^{ki}
\end{equation}

Let $\odot$ be the Hadamard product of the matrix and define

\begin{equation}
  \mathbf{W}^k=\mathbf{w}^k\odot\begin{bmatrix}\mathbf{D}&\cdots&\mathbf{D}\\\vdots&\ddots&\vdots\\\mathbf{D}&\cdots&\mathbf{D}\\\end{bmatrix}_{A_k\times A_k}
\end{equation}

\begin{equation}
  \mathbf{W}=\begin{bmatrix}\mathbf{W}^1&\cdots&0\\\vdots&\ddots&\vdots\\0&\cdots&\mathbf{W}^K\\\end{bmatrix}_{K\times K}
\end{equation}

Then the calculation of the communication overhead can be converted to

\begin{equation}
  R=\mathbf{x}^T\mathbf{Wx}
\end{equation}

\subsubsection{Constraints}

Considering equation (\ref{s2}), the original constraint is nonlinear. We can turn it into a linear constraint by two new constraints:

\begin{align}
  d_n^l&\le\sum_{k\in K}\sum_{i\in A_k} x_n^{ki}E^{kil} \label{s2_1} \\
  d_n^l&\geq\frac{\sum_{k\in K}\sum_{i\in A_k} x_n^{ki}E^{kil}}{Z} \label{s2_2}
\end{align}
where $Z$ is an arbitrarily large constant greater than 1. Equations (\ref{s2_1}) and (\ref{s2_2}) can be equivalent to constraint (\ref{s2}) because $d_n^l$ is a binary variable. Therefore, all constraints in problem $P1$ are transformed into linear constraints.

For linear constraints, we can also vectorize all constraints in the same way as in Sec. \ref{transform}, and the transformed constraints are

\begin{align}
  \mathcal{Q}\mathbf{x}&=\mathbf{b}_1 \label{S1}\\ 
  \mathbf{Hx} &= \mathbf{b}_2 \label{S2} \\
  \mathbf{d}&\le\mathbf{Y}\mathbf{x} \label{S3} \\
  \mathbf{d}&\geq\frac{\mathbf{Y}\mathbf{x}}{Z} \label{S4} \\
  \mathcal{S}\mathbf{d}&\le\mathbf{C}^S \label{S5} \\
  \mathcal{G}\mathbf{x}&\le\mathbf{C}^C \label{S6}
\end{align}

Constraints (\ref{S1})-(\ref{S6}) correspond to (\ref{s1}), (\ref{s5}), (\ref{s2_1}), (\ref{s2_2}), (\ref{s3}), (\ref{s4}) respectively. Appendix A shows the detailed values of matrices $\mathcal{Q},\mathbf{b}_1,\mathbf{H},\mathbf{b}_2,\mathbf{Y},\mathcal{S},\mathbf{C}^S,\mathcal{G},\mathbf{C}^C$. Problem $P1$ can be transformed into

\begin{align}
  \text{P2:} \quad \min_{x,d} \quad & T,R\\
  \text{s.t.} \quad &(\ref{s6}), (\ref{S1})-(\ref{S6})  \nonumber 
\end{align}
where $T = \mathbf{Md}$, $R = \mathbf{x}^T\mathbf{Wx}$.

\subsubsection{Single objective optimization problem}

The original problem has two optimization objectives. We use an additive weighting model to turn the original problem into a single-objective problem. The utility function is as follows:

\begin{align}
  F(\mathbf{x,d}) &= \theta\frac{T - T_{min}}{T_{max} - T_{min}} + (1-\theta)\frac{R - R_{min}}{R_{max} - R_{min}}  \nonumber \\
  &= \frac{\theta}{T_{max} - T_{min}}\mathbf{Md} + \frac{1-\theta}{R_{max} - R_{min}}\mathbf{x}^T\mathbf{Wx} \nonumber \\ &\quad + Const
\end{align}
where $Const = -\frac{\theta T_{min}}{T_{max} - T_{min}} - \frac{(1-\theta)R_{min}}{R_{max} - R_{min}}$. $T_{max}$ and $R_{max}$ represent the maximum value of image pull delay and communication overhead, respectively. $T_{min}$ and $R_{min}$ represent the minimum value of image pull delay and communication overhead. $\theta \in [0,1]$ represents the preference for image pull delay and communication overhead. Finally, the original optimization problem can be transformed into a new optimization problem as follows: 

\begin{align}
  \text{P3:} \quad \min_{x,d} \quad &F(\mathbf{x,d}) \\
  \text{s.t.} \quad &(\ref{s6}), (\ref{S1})-(\ref{S6})  \nonumber 
\end{align}

This problem is an integer quadratic programming problem. The solution of $\text{P3}$ is the weakly Pareto optimal solution of the original problem. If $\theta \in (0,1)$, the solution of $\text{P2}$ is the Pareto optimal solution. The proof can be found in \cite{wangDelayAwareMicroserviceCoordination2021}.

\subsection{Solution based on successive convex approximation}

Since $\mathbf{W}$ is not a positive semi-definite matrix, the problem is a non-convex quadratic programming, which is difficult to solve directly. First, we transform $\text{P3}$ into a convex optimization problem. Then the problem can be solved based on SCA \cite{razaviyaynSuccessiveConvexApproximation2014}. Let $\mathbf{Q} = \mathbf{W}+\mathbf{W}^T$, then minimizing $F(\mathbf{x,d})$ is equivalent to minimize

\begin{equation}
U(\mathbf{x,d}) = C_1\mathbf{M}\mathbf{d} + \frac{1}{2}C_2\mathbf{x}^T\mathbf{Qx} \label{sca_1}
\end{equation}
where $C_1 = \frac{\theta}{T_{max} - T_{min}}$, $C_2 = \frac{1-\theta}{R_{max} - R_{min}}$. For the matrix $\mathbf{Q}$, set the eigenvalues of the matrix as $\lambda_1,\lambda_2,\cdots,\lambda_n$. We can define $\lambda_Q = \max \{|\lambda_i|\}$, and the matrix $\mathbf{Q}$ can be split as follows:

\begin{equation}
\mathbf{Q}  = \mathbf{Q} + \lambda_Q\mathbf{I} - \lambda_Q\mathbf{I} = \mathbf{P} - \mathbf{N}
\end{equation}
where $\mathbf{P} = \mathbf{Q} + \lambda_Q\mathbf{I}$, $\mathbf{N} = \lambda_Q\mathbf{I}$. Equation (\ref{sca_1}) becomes

\begin{equation}
  U(\mathbf{x,d}) = U_1(\mathbf{x,d}) - U_2(\mathbf{x})
\end{equation}
where $U_1(\mathbf{x,d}) = C_1\mathbf{M}\mathbf{d} + \frac{1}{2}C_2\mathbf{x}^T\mathbf{Px}$ is convex, and $- U_2(\mathbf{x}) = - \frac{1}{2}C_2\mathbf{x}^T\mathbf{Nx}$ is nonconvex. Next, we need to make a convex approximation to $- U_2(\mathbf{x})$ at $\bar{x}$, where $\bar{x}$ is a point in the feasible set of $\text{P3}$. Let 

\begin{align}
  l(\mathbf{x}) &= -U_2(\bar{\mathbf{x}}) - \nabla U_2(\bar{\mathbf{x}})^T(\mathbf{x} - \bar{\mathbf{x}}) \nonumber \\
  & = - C_2\bar{\mathbf{x}}^T\mathbf{Nx} + \frac{1}{2}C_2\bar{\mathbf{x}}^T\mathbf{N}\bar{\mathbf{x}} \geqslant - U_2(\mathbf{x})
\end{align}

Finally, we get the convex approximation problem

\begin{align}
  \text{P4:} \quad \min & \quad U_{qp}(\mathbf{x,d};\bar{\mathbf{x}},\bar{\mathbf{d}}) \nonumber\\ &= U_1(\mathbf{x,d}) + l(\mathbf{x}) \nonumber\\
  &= C_1\mathbf{Md} + \frac{1}{2}C_2\mathbf{x}^T\mathbf{Px} \nonumber \\ &\quad - C_2\bar{\mathbf{x}}^T\mathbf{Nx} + \frac{1}{2}C_2\bar{\mathbf{x}}^T\mathbf{N}\bar{\mathbf{x}} \\
  \text{s.t.} &\quad (\ref{s6}), (\ref{S1})-(\ref{S6})  \nonumber 
\end{align}

$\text{P4}$ is a convex quadratic programming problem and can be solved directly with the commercial solver. So we can solve $\text{P3}$ by SCA \cite{razaviyaynSuccessiveConvexApproximation2014} algorithm, as shown in Algorithm \ref{alg:SCA}.

\subsection{Convergence analysis}

In this subsection, we show that the SCA algorithm can reach the optimal solution of $\text{P3}$.

\begin{thm} \label{thm1}
  If $\bar{\mathbf{x}},\bar{\mathbf{d}}$ is the optimal solution to $\text{P4}$, then $\bar{\mathbf{x}},\bar{\mathbf{d}}$ is the KKT point of $\text{P3}$.
\end{thm}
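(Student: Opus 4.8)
\emph{Proof plan.} The plan is to exploit the two defining features of the surrogate $U_{qp}$ in $\text{P4}$: it is minimized over \emph{exactly} the constraint set of $\text{P3}$, and its gradient at the expansion point coincides with that of the true objective $U$. First I would note that $\text{P4}$ is a convex program: $\mathbf{P}=\mathbf{Q}+\lambda_Q\mathbf{I}$ is positive semidefinite because $\lambda_Q=\max_i|\lambda_i|$ dominates every eigenvalue of $\mathbf{Q}$, so $U_1$ is convex; the added term $l(\mathbf{x})=-C_2\bar{\mathbf{x}}^T\mathbf{N}\mathbf{x}+\frac{1}{2}C_2\bar{\mathbf{x}}^T\mathbf{N}\bar{\mathbf{x}}$ is affine in $\mathbf{x}$; and the constraints $(\ref{S1})$--$(\ref{S6})$ together with the box relaxation $0\le x_n^{ki},d_n^l\le 1$ of $(\ref{s6})$ are all affine. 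Since every constraint is affine, the linearity constraint qualification holds automatically, so the optimal solution $(\bar{\mathbf{x}},\bar{\mathbf{d}})$ of $\text{P4}$ satisfies the KKT conditions of $\text{P4}$: there are multipliers associated with $(\ref{S1})$--$(\ref{S6})$ and with the box constraints under which stationarity of $U_{qp}(\cdot\,;\bar{\mathbf{x}},\bar{\mathbf{d}})$, primal feasibility, dual feasibility, and complementary slackness all hold at $(\bar{\mathbf{x}},\bar{\mathbf{d}})$.

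The crux is the gradient-consistency identity: the gradient of $U_{qp}(\cdot\,;\bar{\mathbf{x}},\bar{\mathbf{d}})$ evaluated at $(\bar{\mathbf{x}},\bar{\mathbf{d}})$ equals $\nabla U(\bar{\mathbf{x}},\bar{\mathbf{d}})$. In the $\mathbf{d}$-block this is immediate, since $U$ and $U_{qp}$ contain the identical linear term $C_1\mathbf{M}\mathbf{d}$. In the $\mathbf{x}$-block, using that $\mathbf{P}$ and $\mathbf{N}=\lambda_Q\mathbf{I}$ are symmetric, the $\mathbf{x}$-gradient of $U_{qp}$ is $C_2\mathbf{P}\mathbf{x}-C_2\mathbf{N}\bar{\mathbf{x}}$, which at $\mathbf{x}=\bar{\mathbf{x}}$ collapses to $C_2(\mathbf{P}-\mathbf{N})\bar{\mathbf{x}}=C_2\mathbf{Q}\bar{\mathbf{x}}=\nabla_{\mathbf{x}}U(\bar{\mathbf{x}},\bar{\mathbf{d}})$; this cancellation is exactly why $l$ was taken as the first-order Taylor expansion of $-U_2$ at $\bar{\mathbf{x}}$. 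As a byproduct one gets $U_{qp}(\mathbf{x},\mathbf{d};\bar{\mathbf{x}},\bar{\mathbf{d}})-U(\mathbf{x},\mathbf{d})=\frac{1}{2}C_2(\mathbf{x}-\bar{\mathbf{x}})^T\mathbf{N}(\mathbf{x}-\bar{\mathbf{x}})\ge 0$, with equality at $\bar{\mathbf{x}}$, which confirms that $U_{qp}$ is a tight global upper approximation of $U$ and hence a legitimate SCA surrogate; this property is not needed for the present claim.

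Finally I would substitute the identity into the KKT system of $\text{P4}$. Because $\text{P3}$ and $\text{P4}$ share the same constraint functions, the constraint-gradient terms in the stationarity equation are the same in both KKT systems; replacing $\nabla U_{qp}(\cdot\,;\bar{\mathbf{x}},\bar{\mathbf{d}})$ at $(\bar{\mathbf{x}},\bar{\mathbf{d}})$ by $\nabla U(\bar{\mathbf{x}},\bar{\mathbf{d}})$ leaves every condition unchanged. Hence the very multipliers obtained in the first step certify stationarity, primal feasibility, dual feasibility, and complementary slackness for $\text{P3}$ at $(\bar{\mathbf{x}},\bar{\mathbf{d}})$, i.e.\ $(\bar{\mathbf{x}},\bar{\mathbf{d}})$ is a KKT point of $\text{P3}$.

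I expect the genuinely delicate points to be two, both minor: carrying the symmetrization $\mathbf{Q}=\mathbf{W}+\mathbf{W}^T$ and the split $\mathbf{Q}=\mathbf{P}-\mathbf{N}$ correctly through the gradient computation so that the linearization term in $l$ cancels exactly at $\bar{\mathbf{x}}$; and being explicit that ``KKT point of $\text{P3}$'' is understood in the sense of the continuous relaxation (the binary set $\{0,1\}$ being non-smooth), which is harmless here since the affine constraints of $\text{P3}$ supply a constraint qualification for free.
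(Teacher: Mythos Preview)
Your proposal is correct and follows essentially the same route as the paper: write the KKT system of $\text{P4}$ at $(\bar{\mathbf{x}},\bar{\mathbf{d}})$, use the cancellation $\mathbf{P}\bar{\mathbf{x}}-\mathbf{N}\bar{\mathbf{x}}=\mathbf{Q}\bar{\mathbf{x}}$ in the stationarity condition, and observe that the resulting system is precisely the KKT system of $\text{P3}$ with the same multipliers. Your version is more explicit about convexity of $\text{P4}$, the automatic constraint qualification from affine constraints, and the need to interpret ``KKT point of $\text{P3}$'' via the continuous relaxation of the binary variables, all of which the paper leaves tacit.
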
 
  
\begin{proof}
  Proof is provided in Appendix B. 
\end{proof}

\begin{thm} \label{thm2}
  The problem $\text{P3}$ can get a stationary solution by Algorithm \ref{alg:SCA}.
\end{thm}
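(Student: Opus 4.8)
The plan is to invoke the general convergence theory for successive convex approximation \cite{razaviyaynSuccessiveConvexApproximation2014}, after checking that the surrogate $U_{qp}(\mathbf{x},\mathbf{d};\bar{\mathbf{x}},\bar{\mathbf{d}})$ used in $\text{P4}$ has the three structural properties that theory requires. First I would verify: (i) \emph{global upper bound}, $U_{qp}(\mathbf{x},\mathbf{d};\bar{\mathbf{x}},\bar{\mathbf{d}}) \ge U(\mathbf{x},\mathbf{d})$ for every feasible $(\mathbf{x},\mathbf{d})$, which is immediate from $l(\mathbf{x}) \ge -U_2(\mathbf{x})$ established just before $\text{P4}$; (ii) \emph{tightness at the expansion point}, $U_{qp}(\bar{\mathbf{x}},\bar{\mathbf{d}};\bar{\mathbf{x}},\bar{\mathbf{d}}) = U(\bar{\mathbf{x}},\bar{\mathbf{d}})$, because $l(\bar{\mathbf{x}}) = -U_2(\bar{\mathbf{x}})$; and (iii) \emph{first-order consistency}, $\nabla_{(\mathbf{x},\mathbf{d})} U_{qp}(\bar{\mathbf{x}},\bar{\mathbf{d}};\bar{\mathbf{x}},\bar{\mathbf{d}}) = \nabla_{(\mathbf{x},\mathbf{d})} U(\bar{\mathbf{x}},\bar{\mathbf{d}})$, since $l$ is precisely the first-order Taylor expansion of $-U_2$ at $\bar{\mathbf{x}}$. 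Combined with the already-noted convexity and the joint continuity of $U_{qp}$, these make every subproblem $\text{P4}$ well posed and make the iteration map well behaved.

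Next I would establish the descent property of Algorithm \ref{alg:SCA}. Writing $(\mathbf{x}^{(t)},\mathbf{d}^{(t)})$ for the iterate at step $t$ and $(\mathbf{x}^{(t+1)},\mathbf{d}^{(t+1)})$ for a minimizer of $\text{P4}$ centered at it, the chain
\[
U(\mathbf{x}^{(t+1)},\mathbf{d}^{(t+1)}) \le U_{qp}(\mathbf{x}^{(t+1)},\mathbf{d}^{(t+1)};\mathbf{x}^{(t)},\mathbf{d}^{(t)}) \le U_{qp}(\mathbf{x}^{(t)},\mathbf{d}^{(t)};\mathbf{x}^{(t)},\mathbf{d}^{(t)}) = U(\mathbf{x}^{(t)},\mathbf{d}^{(t)})
\]
shows $\{U(\mathbf{x}^{(t)},\mathbf{d}^{(t)})\}$ is non-increasing; since the feasible set of $\text{P3}$ is compact and $U$ is continuous, this sequence is bounded below and converges. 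Moreover, because the only nonconvex piece is the quadratic $-U_2$, the surrogate–objective gap is exactly $U_{qp}(\mathbf{x}^{(t+1)},\mathbf{d}^{(t+1)};\mathbf{x}^{(t)},\mathbf{d}^{(t)}) - U(\mathbf{x}^{(t+1)},\mathbf{d}^{(t+1)}) = \tfrac12 C_2 (\mathbf{x}^{(t+1)}-\mathbf{x}^{(t)})^T\mathbf{N}(\mathbf{x}^{(t+1)}-\mathbf{x}^{(t)}) = \tfrac12 C_2\lambda_Q\|\mathbf{x}^{(t+1)}-\mathbf{x}^{(t)}\|^2$, so telescoping the descent inequality gives $\sum_t \|\mathbf{x}^{(t+1)}-\mathbf{x}^{(t)}\|^2 < \infty$ and hence $\|\mathbf{x}^{(t+1)}-\mathbf{x}^{(t)}\| \to 0$; using the coupling constraints (\ref{S3})--(\ref{S4}) and the strictly positive entries of $\mathbf{M}$ I would then argue that also $\|\mathbf{d}^{(t+1)}-\mathbf{d}^{(t)}\| \to 0$.

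Finally I would pass to the limit. By compactness the iterate sequence has an accumulation point $(\mathbf{x}^\star,\mathbf{d}^\star)$; fix a subsequence converging to it. Each $(\mathbf{x}^{(t+1)},\mathbf{d}^{(t+1)})$ solves a convex program over the (convex) feasible set, so it obeys the variational inequality
\[
\nabla_{(\mathbf{x},\mathbf{d})} U_{qp}(\mathbf{x}^{(t+1)},\mathbf{d}^{(t+1)};\mathbf{x}^{(t)},\mathbf{d}^{(t)})^T\!\begin{bmatrix}\mathbf{x}-\mathbf{x}^{(t+1)}\\\mathbf{d}-\mathbf{d}^{(t+1)}\end{bmatrix} \ge 0
\]
for all feasible $(\mathbf{x},\mathbf{d})$. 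Using $\|(\mathbf{x}^{(t+1)},\mathbf{d}^{(t+1)})-(\mathbf{x}^{(t)},\mathbf{d}^{(t)})\|\to 0$, the continuity of $\nabla U_{qp}$, and first-order consistency, the limit of this inequality along the subsequence is $\nabla U(\mathbf{x}^\star,\mathbf{d}^\star)^T[(\mathbf{x},\mathbf{d})-(\mathbf{x}^\star,\mathbf{d}^\star)] \ge 0$ for all feasible $(\mathbf{x},\mathbf{d})$, i.e. $(\mathbf{x}^\star,\mathbf{d}^\star)$ is a stationary point of $\text{P3}$; by Theorem \ref{thm1} it is in fact a KKT point. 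I expect the main obstacle to be exactly this passage-to-the-limit step: the surrogate $U_1$ is only convex (its Hessian $C_2(\mathbf{Q}+\lambda_Q\mathbf{I})$ is merely positive semidefinite and the $\mathbf{d}$-block is linear), so one cannot simply appeal to strong convexity to force consecutive iterates together, and the argument must instead ride on the exact quadratic form of the surrogate–objective gap derived above, with extra care to transfer the $\mathbf{x}$-conclusion to the coupled $\mathbf{d}$-variables; a cleaner alternative is to check the hypotheses of the unified SCA/BSUM convergence theorem of \cite{razaviyaynSuccessiveConvexApproximation2014} and cite it directly, which absorbs precisely this technical step.
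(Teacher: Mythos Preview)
Your proposal is correct and follows the same majorization--minimization skeleton as the paper: verify that $U_{qp}$ is a tight convex upper surrogate for $U$, derive monotone descent of $\{U(\mathbf{x}^{(t)},\mathbf{d}^{(t)})\}$, and pass to the limit to obtain stationarity. The paper's proof (Appendix~C) differs in two technical respects worth noting. First, Algorithm~\ref{alg:SCA} takes a convex-combination step with $\alpha\in(0,1]$, and the paper's chain of inequalities is written for that general step size, using convexity of $U_{qp}(\cdot\,;\mathbf{x}^r,\mathbf{d}^r)$ in each block to interpolate between $(\mathbf{x}^r,\mathbf{d}^r)$ and the subproblem minimizer $(z_x^{r+1},z_d^{r+1})$; your write-up implicitly sets $\alpha=1$ by identifying the next iterate with the P4 minimizer. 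Second, rather than your explicit quadratic gap $\tfrac12 C_2\lambda_Q\|\mathbf{x}^{(t+1)}-\mathbf{x}^{(t)}\|^2$ and a variational-inequality limit, the paper bounds the per-step decrease below by (minus) directional derivatives of $U$ along the search directions, telescopes, and concludes that these directional derivatives vanish in the limit. Your route is arguably cleaner and more self-contained---the exact gap formula avoids the somewhat informal directional-derivative bookkeeping---while the paper's version has the advantage of covering the full $\alpha\in(0,1]$ range in one pass; either way, both arguments (and your suggested fallback of invoking the BSUM/SCA theorem of \cite{razaviyaynSuccessiveConvexApproximation2014} directly) land at the same conclusion.
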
 

\begin{proof}
  Proof is provided in Appendix C. 
\end{proof}

According to Theorem \ref{thm1} and Theorem \ref{thm2}, we can get that Algorithm \ref{alg:SCA} can converge to a stationary point and the point is the KKT point of $\text{P3}$. Since the original problem is non-convex, the global optimal solution cannot be obtained. The solution obtained by Algorithm \ref{alg:SCA} based on the SCA method \cite{razaviyaynSuccessiveConvexApproximation2014} is the approximate optimal solution of the original problem. 

\begin{figure}[tbp]
	\begin{algorithm}[H]
		\caption{Successive convex approximation algorithm}
    \label{alg:SCA}
    % \LinesNumbered
    \begin{algorithmic}[1]
      \STATE{Find a feasible point $\mathbf{x}^0$ and $\mathbf{d}^0$, choose a stepsize $\alpha \in (0,1]$, and set $r = 0$, $\epsilon > 0$}
      \REPEAT
      \STATE{$z_x^{r+1},z_d^{r+1} = \arg \min U_{qp}(\mathbf{x,d};\mathbf{x}^r,\mathbf{d}^r)$}
      \STATE{$x^{r+1} = x^{r} + \alpha(z_x^{r+1} - x^{r})$}
      \STATE{$d^{r+1} = d^{r} + \alpha(z_d^{r+1} - y^{r})$}
      \STATE{$r \leftarrow r + 1$}
      \UNTIL{$\Vert \mathbf{x}^{r} - \mathbf{x}^{r-1} \Vert + \Vert \mathbf{d}^{r} - \mathbf{d}^{r-1} \Vert \leqslant \epsilon$}
    \end{algorithmic}
	\end{algorithm}
\end{figure}

\section{Resource Reallocation Scheme} \label{reallocation}

We can get a microservice deployment strategy for layer sharing and chain sharing from Sec. \ref{Solution1}. However, the computing resources of all servers will not be fully utilized due to the constraints of computing resources and storage resources of edge servers. Edge servers may face the problem that one resource is used up while the other resource is still available. It is a waste of spare resources. In this section, we will propose a server resource redistribution method, which can fully utilize the spare resources of the server.

\subsection{Problem formulation}

After a microservice is deployed, the deployment location of the microservice remains unchanged until the end of the microservice. Assume that the computing resource of server $n$ is $C_n^C$, and $J$ microservices are deployed on it. These microservices can be described by a set $\mathbb{J} = \{1,\cdots,J\}$. The minimum computing resource requested by microservice $j$ is $u_j$, and the computing resource actually allocated to microservice $j$ is $f_j$. The computing resources allocated to the microservice must be higher than the computing resources it requests, so there is a constraint $
f_j \geqslant u_j$.

Assuming that the amount of data that the microservice needs to process is $Data$. The original processing time required is $t_{old} = \frac{Data}{u_j}$, and the new processing time is $t_{new} = \frac{Data}{f_j}$. The ratio of the new processing time to the original processing time is $\frac{t_{new}}{t_{old}} = \frac{u_j}{f_j}$. So we define the evaluation function $e_j = \frac{u_j}{f_j}$ to evaluate the impact of allocated computing resources on the processing efficiency of microservices. Then we can define the optimization problem as follows

\begin{align}
  \text{P5:} \min U =  & \quad \sum_{j \in \mathbb{J}}e_j \\
  \text{s.t.} \quad& f_j \geqslant u_j, \forall j \in \mathbb{J} \\
  & \sum_{j \in \mathbb{J}} f_j \leqslant  C_n^C \label{resource}
\end{align}

Constraint (\ref{resource}) means that the total computing resources allocated need to be less than the total resources of the server.

\begin{figure}[tbp]
	\begin{algorithm}[H]
		\caption{Greedy deployment strategy}
    \label{alg:Simple}
    % \LinesNumbered
    \begin{algorithmic}[1]
      \STATE{Normalize the communication data and layer size of each microservice by: $w^{k}_{ij,new} = \theta\frac{w^{k}_{ij} - w_{min}}{w_{max} - w_{min}}$, $S^l_{new} = (1-\theta)\frac{S^l - S_{min}}{S_{max} - S_{min}}$}
      \STATE{Sort $w^{ki}_{new}$ and $S^l_{new}$. The larger the value is, the higher the priority will be. Each value corresponds to two microservices with a large amount of communication data or several groups of microservices with a larger image layer. Get the sorted list $List$}
      \FOR{$ms$ in $List$ }
        \STATE{Check if these microservices in $ms$ have been deployed}
        \IF{all microservices have been deployed}
          \STATE{continue}
        \ELSE
          \IF{microservices in $ms$ are from $w^{k}_{ij,new}$}
            \STATE{Find the closest server $n$ where the microservice is deployed and deploy microservice in $n$.}
            \IF{server $n$ has no enough resource}
              \STATE{find a closet server from server $n$ to deploy.}
            \ENDIF

          \ELSE
            \STATE{Deploy microservice in the server which has maximum $b_n^{cloud}$ and enough resource}
          \ENDIF
        \ENDIF
      \ENDFOR
      \STATE{Output deployment strategy}
    \end{algorithmic}
	\end{algorithm}
\end{figure}

\subsection{Solution based on Lagrange Multiplier Method}

The Lagrangian function of $\text{P5}$ is constructed as

\begin{equation}
L(f_j,\lambda_i,\mu) = \sum_{j \in \mathbb{J}}e_j + \sum_{j \in \mathbb{J}}\lambda_i(u_j-f_j)+ \mu(\sum_{j \in \mathbb{J}} f_j - C_n^C)
\end{equation}

Its KKT condition is

\begin{align}
  \begin{cases}
  & \nabla L_{f_j}(f_j,\lambda_i,\mu) \\ &\quad= -\sum_{j \in \mathbb{J}}\frac{u_j}{f_j^2} + \sum_{j \in \mathbb{J}}\lambda_if_j+ \mu J = 0 \\
  & \lambda_i(u_j-f_j) = 0, \forall i \in \mathbb{J} \\
  & \mu(\sum_{j \in \mathbb{J}} f_j - C_n^C) = 0 \\
  & \lambda_i  \geqslant 0, \forall i \in \mathbb{J} \\
  & \mu \geqslant 0
  \end{cases} \label{KKT}
\end{align}

By solving (\ref{KKT}), we can get

\begin{equation}
  f_j = \frac{u_j}{\sum_{j \in \mathbb{J}}u_j}C_n^C
\end{equation}

A simple explanation is that each microservice is proportionally multiplied by the ratio of the total computing resources to the initial request resources. In this way, we can redistribute the computing resources and make full use of the computing resources.

\section{Performance Evaluation} \label{performance}

In this section, we evaluate the performance to verify the effectiveness of our proposed method. We use \emph{Gurobi} \cite{GurobiFastestSolver} to solve the integer quadratic programming problem $\text{P4}$. The proposed method is compared with five methods: 

\begin{itemize}
  \item Greedy Deployment Strategy\cite{guLayerawareCollaborativeMicroservice2022} (GDS): A deployment strategy based on the greedy strategy in \cite{guLayerawareCollaborativeMicroservice2022}. We modified it to fit the experiments in this paper. This algorithm works by weighting the size of the layer and chain. The steps of the algorithm are shown in Algorithm \ref{alg:Simple}.
  \item Layer-match Scheduling\cite{fuFastEfficientContainer2020} (LS): For each microservice, select an edge server with most amount of its image layers stored locally and sequence layers according to the assignment order.
  \item Kubernetes Deployment Strategy\cite{KubernetesKubernetes} (K8S): Kubernetes default deployment policy schedules microservices to edge servers with the required images stored locally, otherwise, to the edge server with the least total download size.
  \item Layer-sharing Deployment Strategy (LDS): A deployment strategy that only considers layer sharing. It is a special case of the proposed method when $\theta=1$.
  \item Chain-sharing Deployment Strategy (CDS): A deployment strategy that only considers chain sharing. It is a special case of the proposed method when $\theta=0$.
\end{itemize}

We use Python to conduct simulation experiments on multiple servers to evaluate the performance of the proposed method under different conditions. To accurately evaluate the effectiveness in the real world, we conduct experiments with five real edge servers. Furthermore, we carried out large-scale simulation tests on 15 servers to evaluate the adaptability of the algorithm in large-scale scenarios.

% TODO:记得查看一下是不是大规模的和实际的需要倒过来

\subsection{Simulation experiment}

\subsubsection{Experimental environment}

\begin{figure*}[t]
  \centering
  \subfigure[Different production scale]{
    \centering
    \includegraphics[width=0.31\textwidth]{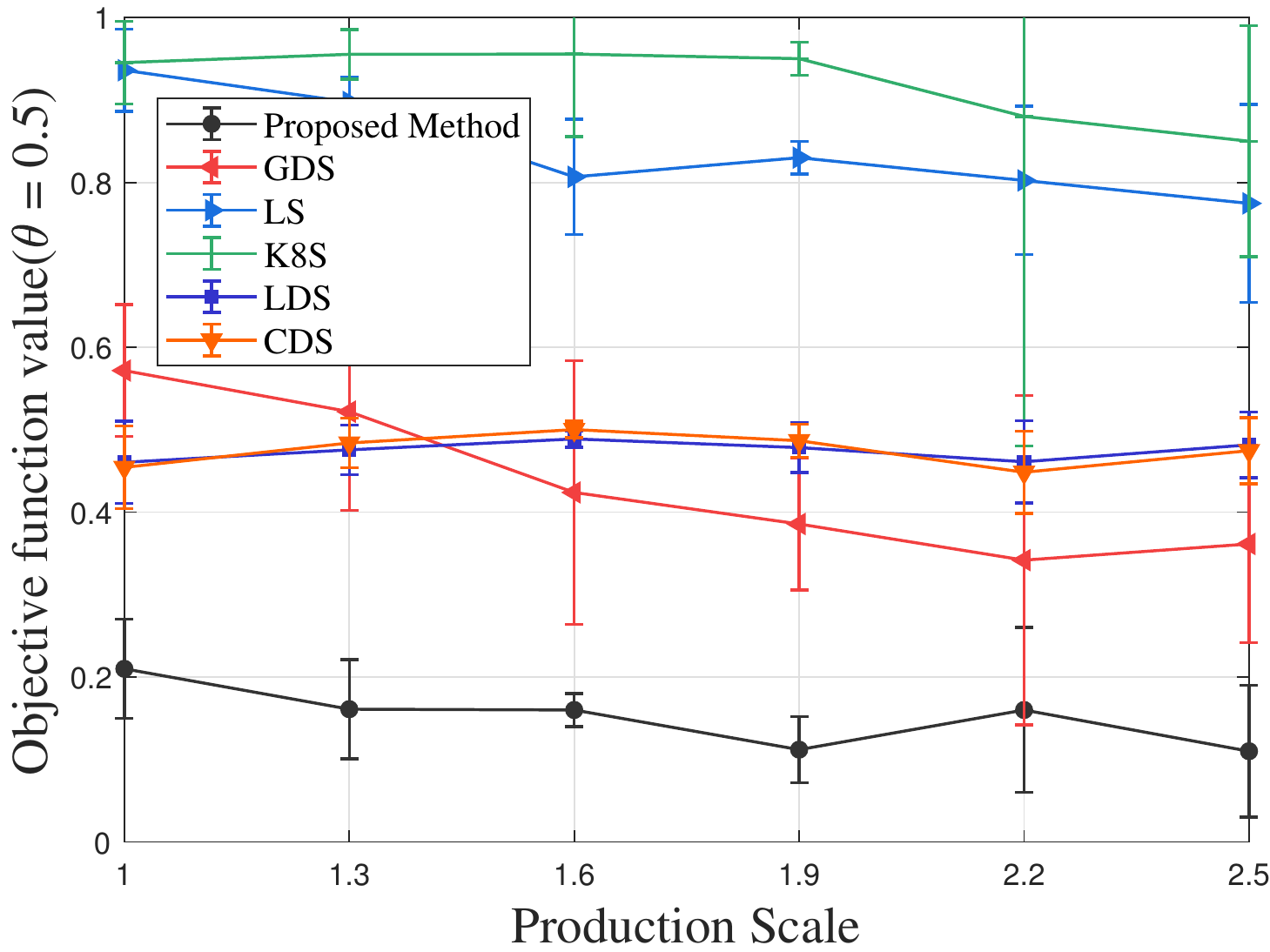}
    \label{experiment1_1_1}
  }
  \subfigure[Different microservices]{
    \centering
    \includegraphics[width=0.31\textwidth]{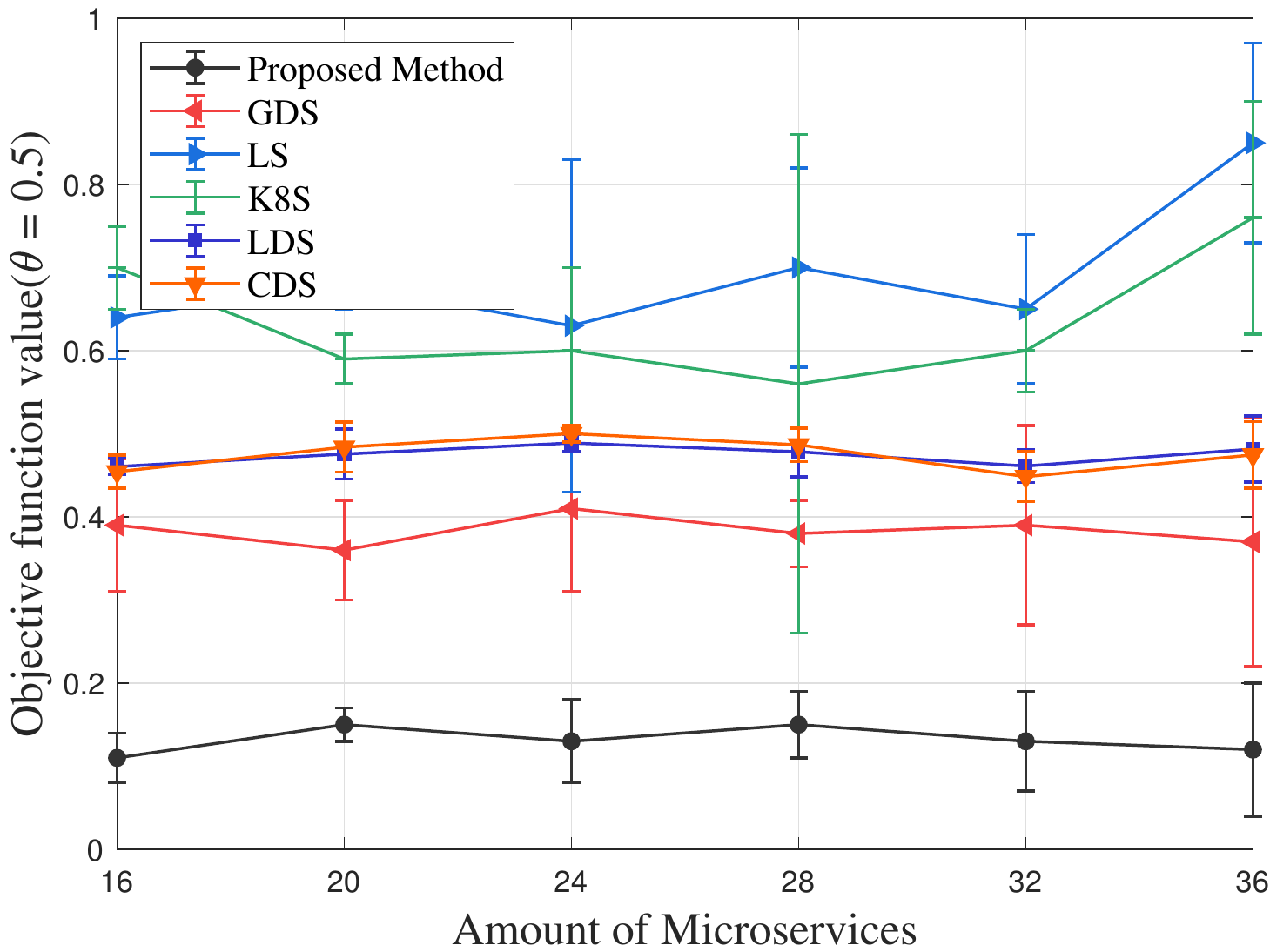}
    \label{experiment1_1_2}
  }
  \subfigure[Different $\theta$]{
    \centering
    \includegraphics[width=0.31\textwidth]{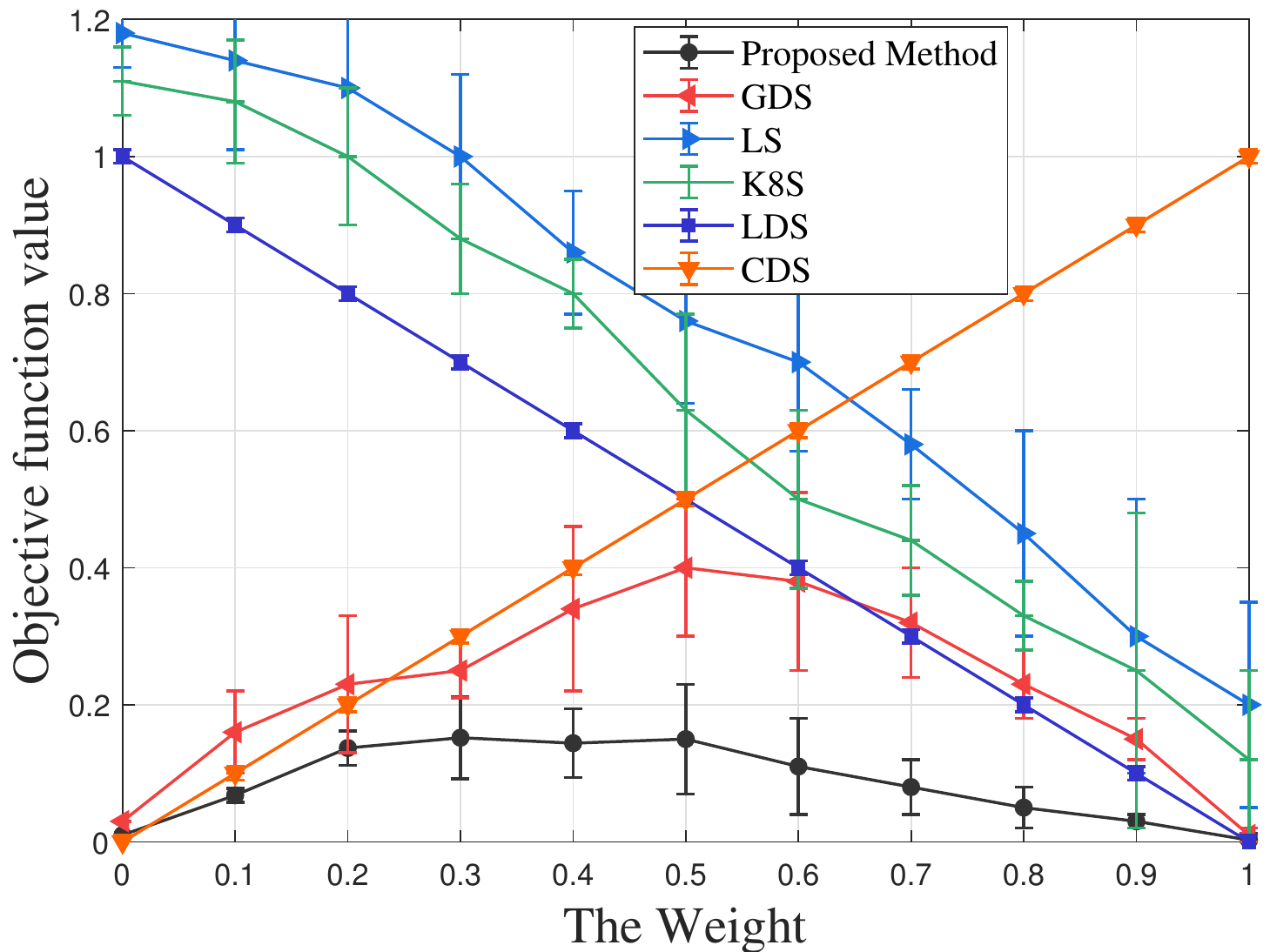}
    \label{experiment1_1_3}
  }
  \caption{Objective function value with different condition}
  \label{experiment1_1}
\end{figure*}

\begin{figure*}[t]
  \centering
  \subfigure[Total image pull delay with different $\theta$]{
    \centering
    \includegraphics[width=0.4\textwidth]{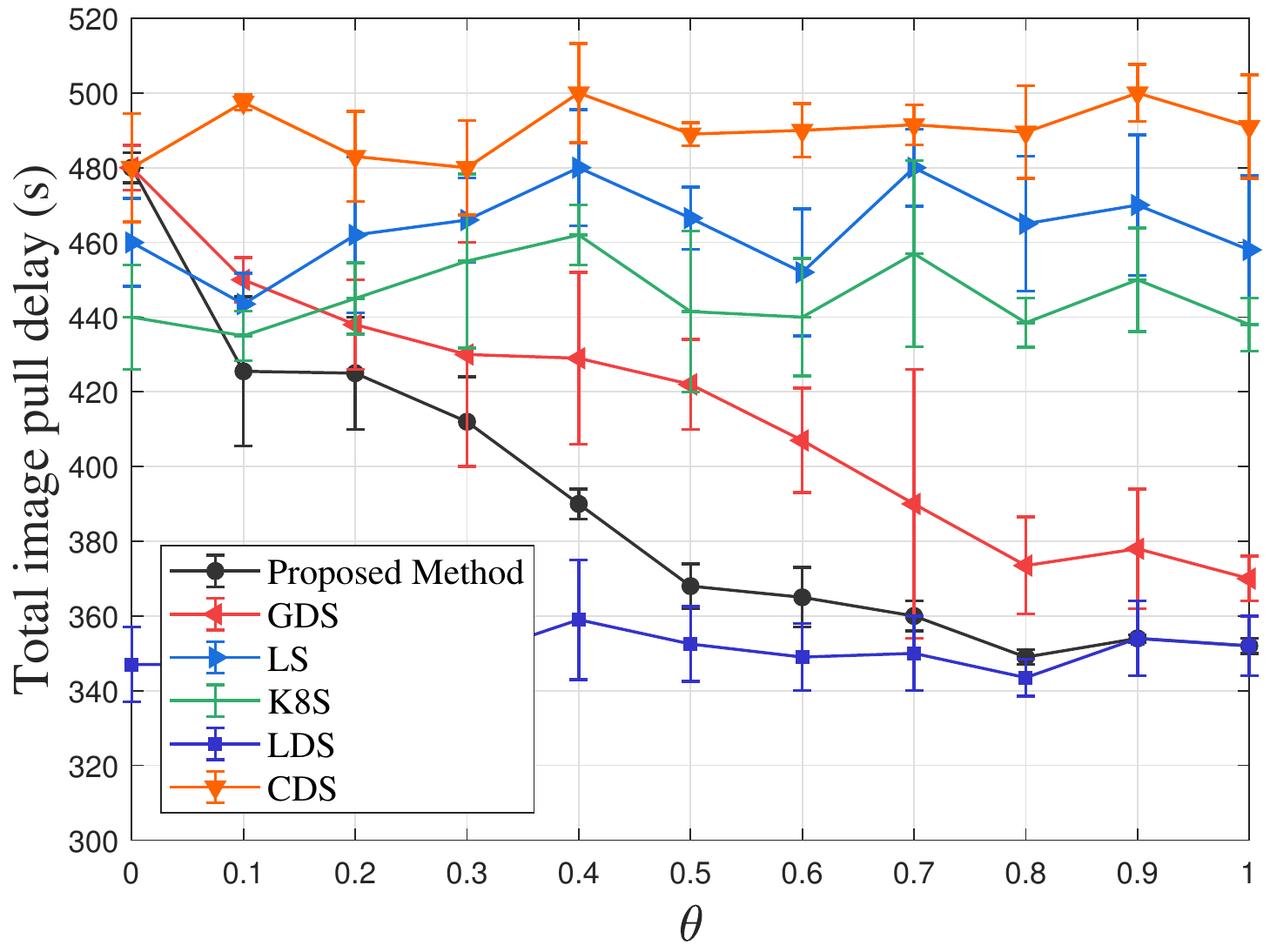}
    \label{experiment1_2_1}
  }
  \subfigure[Total communication overhead with different $\theta$]{
    \centering
    \includegraphics[width=0.4\textwidth]{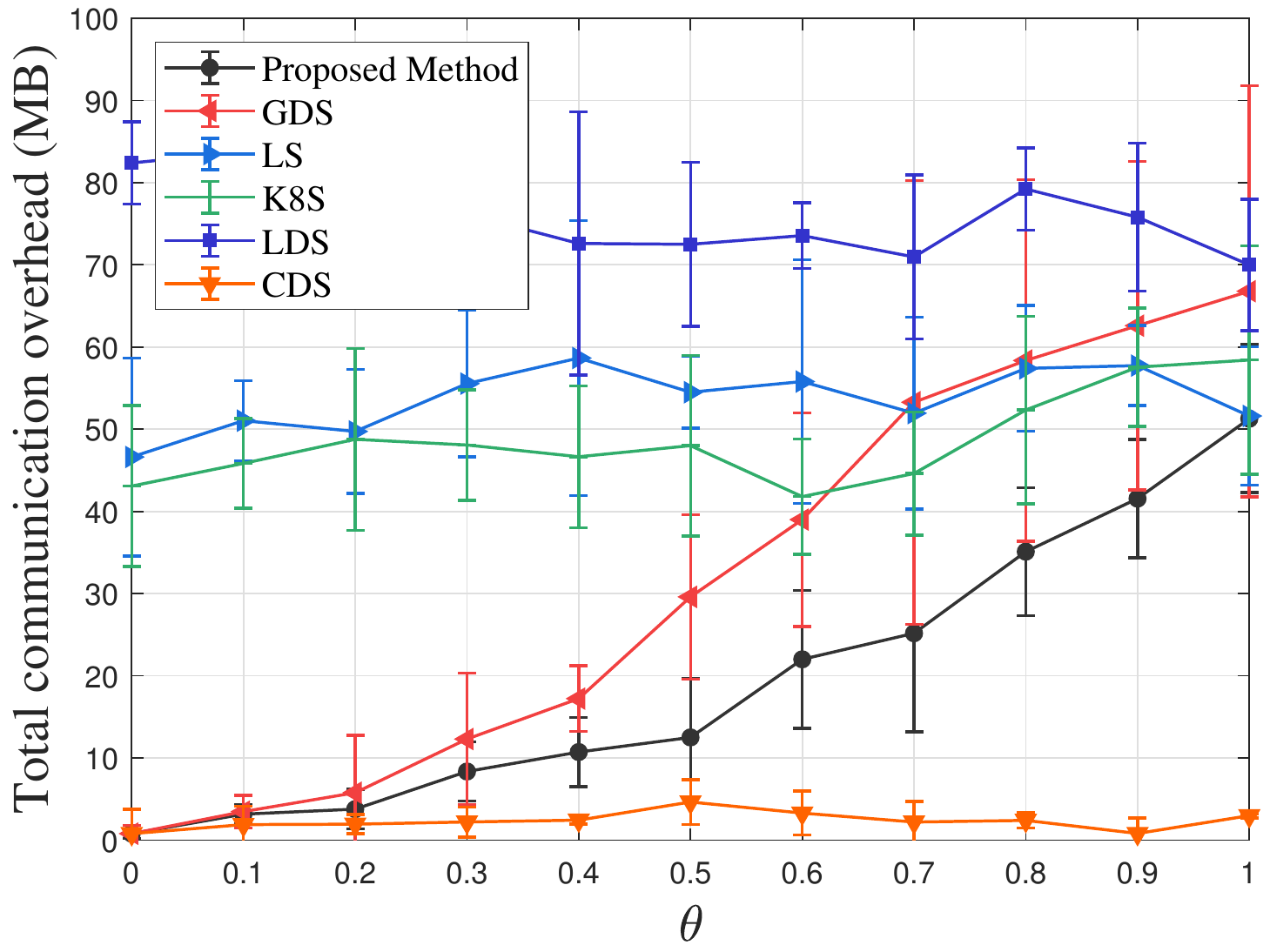}
    \label{experiment1_2_2}
  }
  \caption{Total image pull delay and total communication overhead with different $\theta$}
  \label{experiment1_2}
\end{figure*}

The experimental platform is Python 3.9.12. The experiments are carried out on a CentOS 7 system equipped with Inter 4210R, 2.40GHz, and 64 GB RAM. We simulated a smart manufacturing production scenario with up to 9 edge servers and 36 microservices. The average storage resource of edge servers is 8 GB, the average computing resource (CPU frequency) of each server is 1.8 GHz with 4 cores, and the bandwidth between the server and the cloud server is 80-200 MB/s. The hop of adjacent servers is 1, and the element values of the $\mathbf{D}$ matrix vary from 0 to 5, which means the maximum hop of servers is 5. Each application consists of 2-6 microservices, and the communication data between microservices ranges from 100-2000 KB. The computing resource requirements of each microservice range from 0.002 GHz to 1.0 GHz. The number of layers of each microservice is in the range of 6-13. By dividing these layers into shareable and unshareable layers, each image can be regarded as a microservice composed of 1-2 layers. This can reduce the difficulty of calculation. The size of each layer varies from 1-1220 MB. The above data are randomly generated in each experiment to verify the stability of the proposed method. The specific experimental parameter settings are shown in Table \ref{tparameter}.

% table 4
\begin{table}[htbp]
	\centering 
	\caption{Experimental parameter}
	\label{tparameter}
	%字母的个数对应列数，|代表分割线
	% l代表左对齐，c代表居中，r代表右对齐
	\begin{tabular}{p{35pt}|p{60pt}} \hline \hline
    \rule{0pt}{8pt}\textbf{Symbol} & \textbf{Value}\\ \hline
    \rule{0pt}{8pt} $K$ & 4-9 \\ \hline
    \rule{0pt}{8pt} $N$ & 4-9 \\ \hline
    \rule{0pt}{8pt}$A_k$ &2-6 \\ \hline
    \rule{0pt}{8pt}$C_n^C$ &1.4-2.2 GHz \\ \hline
    \rule{0pt}{8pt}$C_n^S$ &4-16 GB \\ \hline
    \rule{0pt}{8pt}$b_n^{cloud}$ &120-200 MB/s \\ \hline
    \rule{0pt}{8pt} $S^l$ &1-1220 MB \\ \hline
    \rule{0pt}{8pt}$u^{ki}$ &0.002-1.0 GHz \\ \hline
    \hline
	\end{tabular}
\end{table}

\subsubsection{Experimental results}

Fig. \ref{experiment1_1} shows the objective function value of different methods. Fig. \ref{experiment1_1_1} shows the objective function value in different production scale. We take the minimum amount of microservices $ms_{min} = 12$ and the amount of servers $n_{min} = 4$ in this experiment as the benchmark values. Then the scale can be described as $Scale = \frac{1}{2}(\frac{N_{ms}}{ms_min} + \frac{N_{n}}{n_min})$, where $N_{ms}$ is the amount of microservices, and $N_n$ is the amount of servers. As can be seen from the figure, our proposed deployment strategy can minimize the objective function compared with the other five methods. The proposed method can also reach relatively stable results under different numbers of microservices and servers. The GDS method can also achieve a good deployment strategy by weighting the layer size and communication overhead. However, due to its greedy strategy, the optimal solution may not always be obtained. The LDS method and the CDS method cannot make the objective function optimal because they only consider one aspect of resource sharing. The LS and K8S methods only consider layer sharing and can not get a better result due to the high communication overhead.

% 考虑上更加详细的分析，theta的选取在什么情况下更好，是不是0.5的时候一定最好，针对不同的规模可以考虑不同的theta方案

Fig. \ref{experiment1_1_2} shows the objective value with different microservices and nine servers. It simulates different production loads. The higher the number of microservices is, the higher the load on one server will be. We can see that the proposed deployment strategy can achieve the optimal objective function value, and the value fluctuates within a small range under different microservice loads, which shows that the proposed method is suitable for different load conditions and has good stability. The results of other methods are worse than the proposed method.

\begin{figure}[tbp]
  \subfigure[Figure of edge servers]{
    \centering
    \includegraphics[width=0.23\textwidth]{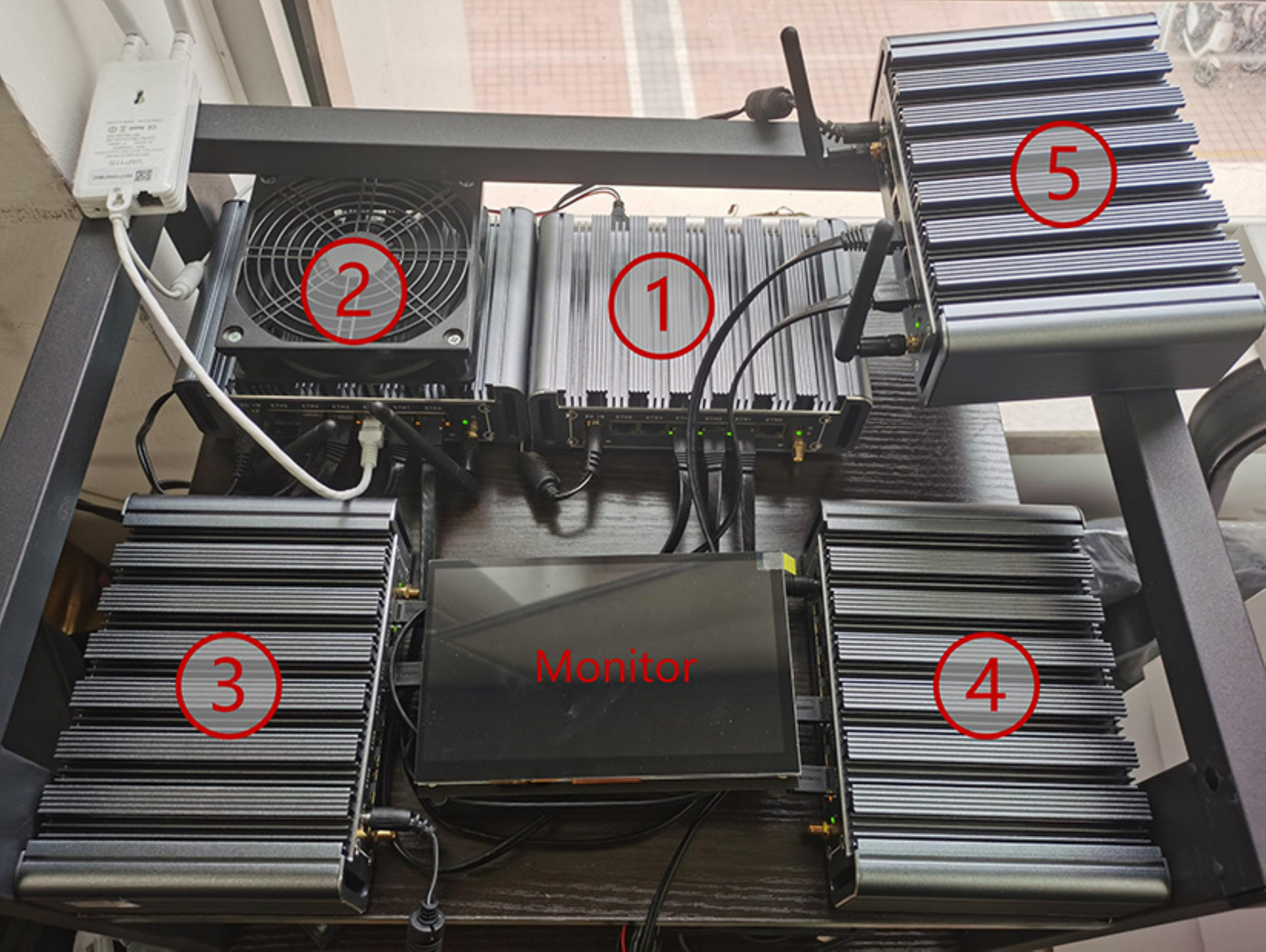}
    \label{experiment2_1_1}
  }
  \subfigure[Topology of edge servers]{
    \centering
    \includegraphics[width=0.23\textwidth]{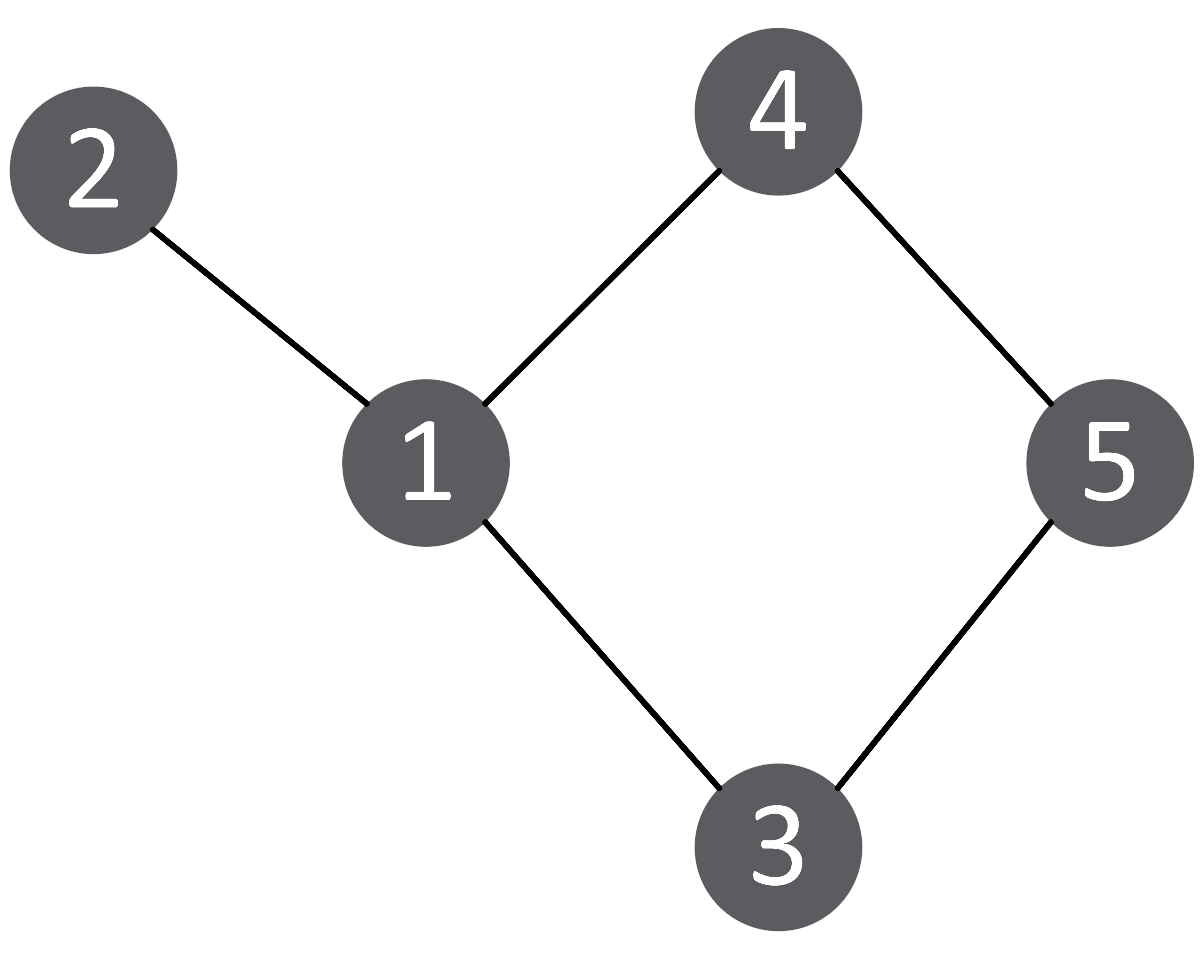}
    \label{experiment2_1_2}
  }
  \caption{Figure and topology of edge servers}
  \label{experiment2_1}
\end{figure}

\begin{figure}[t]
  \subfigure[Delay with different storage capacity]{
    \centering
    \includegraphics[width=0.23\textwidth]{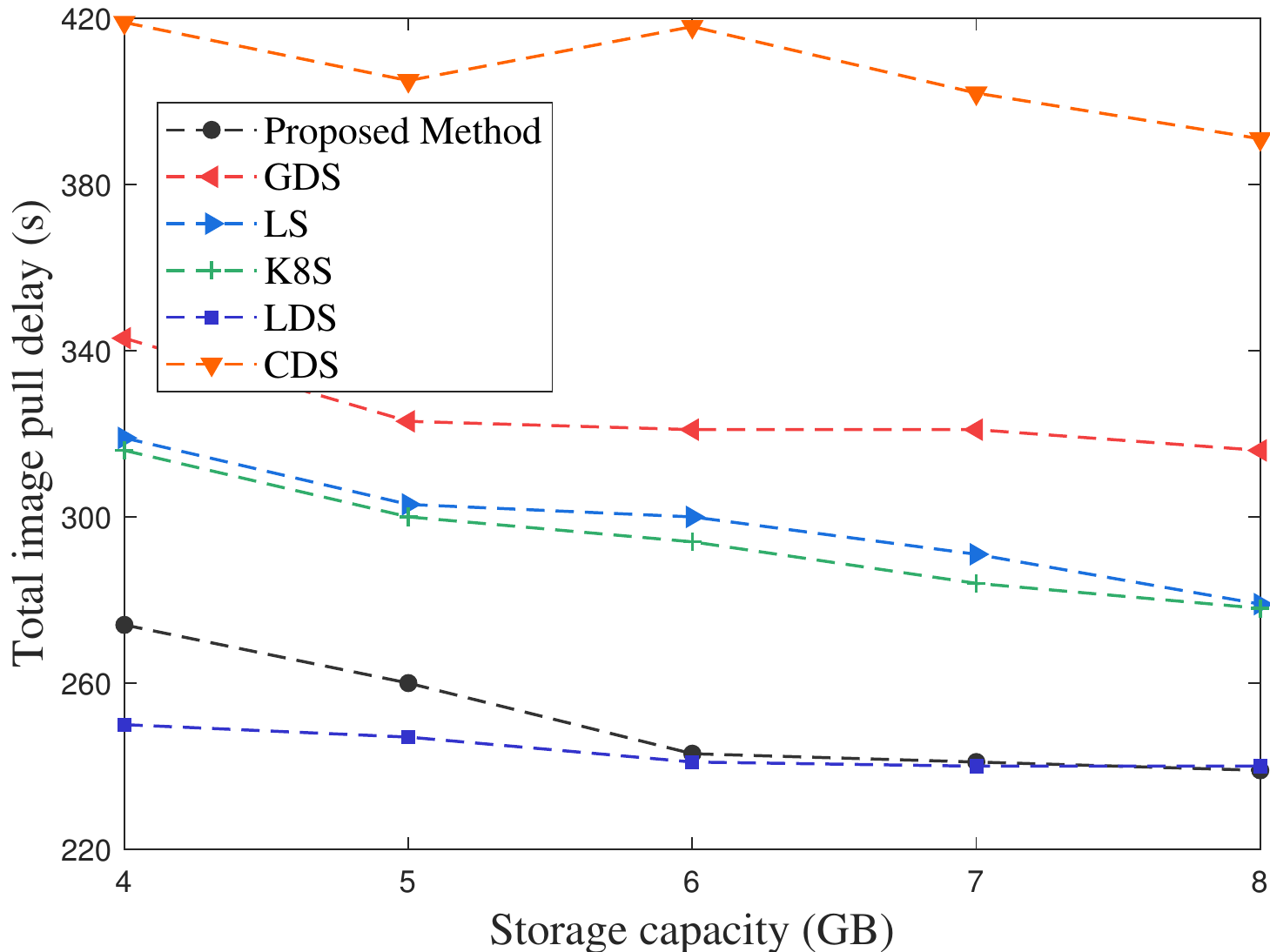}
    \label{experiment2_2_1}
  }
  \subfigure[Overhead with different storage capacity]{
    \centering
    \includegraphics[width=0.23\textwidth]{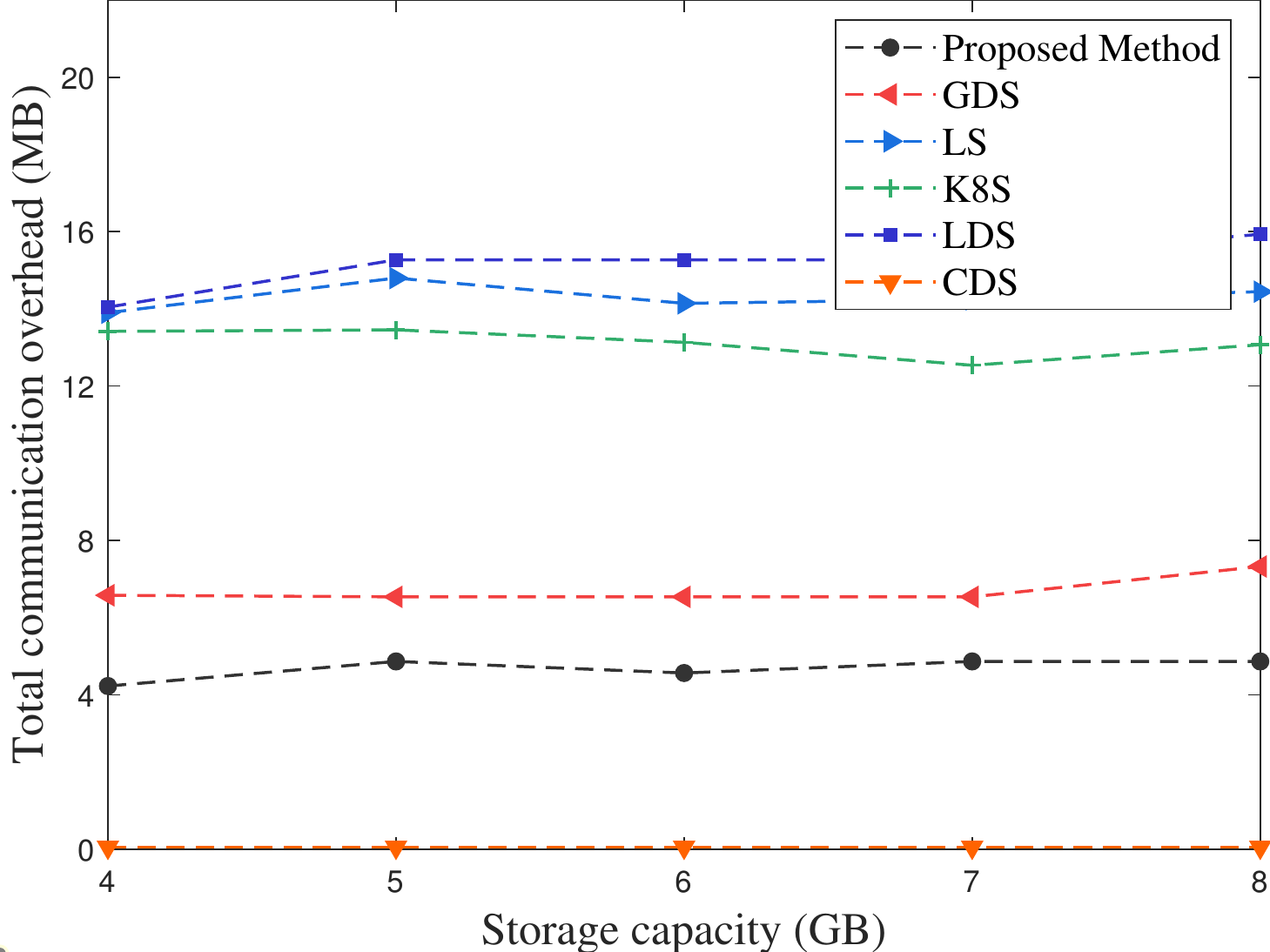}
    \label{experiment2_2_2}
  }
  \caption{Delay and overhead with different storage capacity}
  \label{experiment2_2}
\end{figure}

\begin{figure}[t]
  \subfigure[Delay with different computing capacity]{
    \centering
    \includegraphics[width=0.23\textwidth]{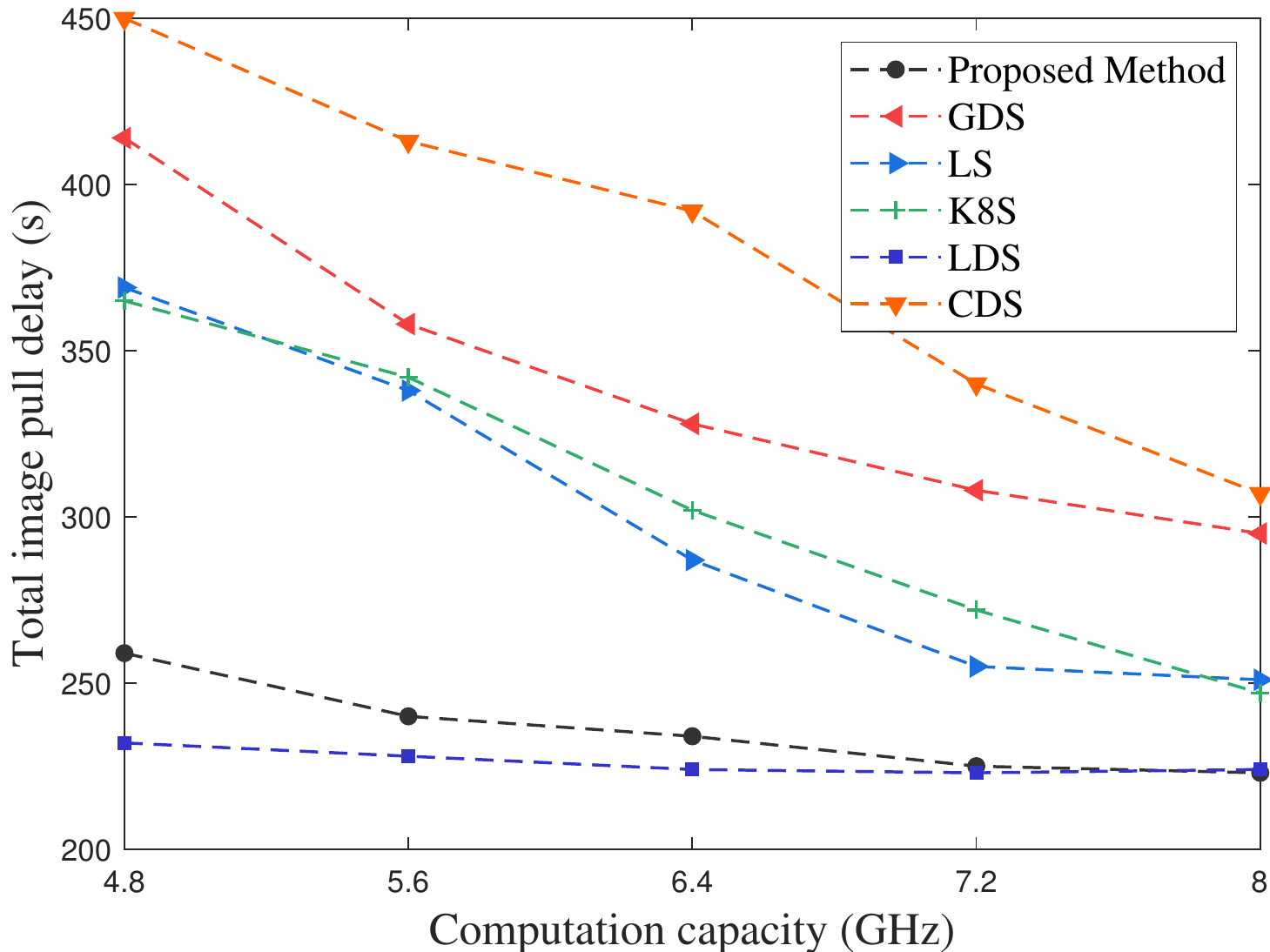}
    \label{experiment2_2_3}
  }
  \subfigure[Overhead with different computing capacity]{
    \centering
    \includegraphics[width=0.23\textwidth]{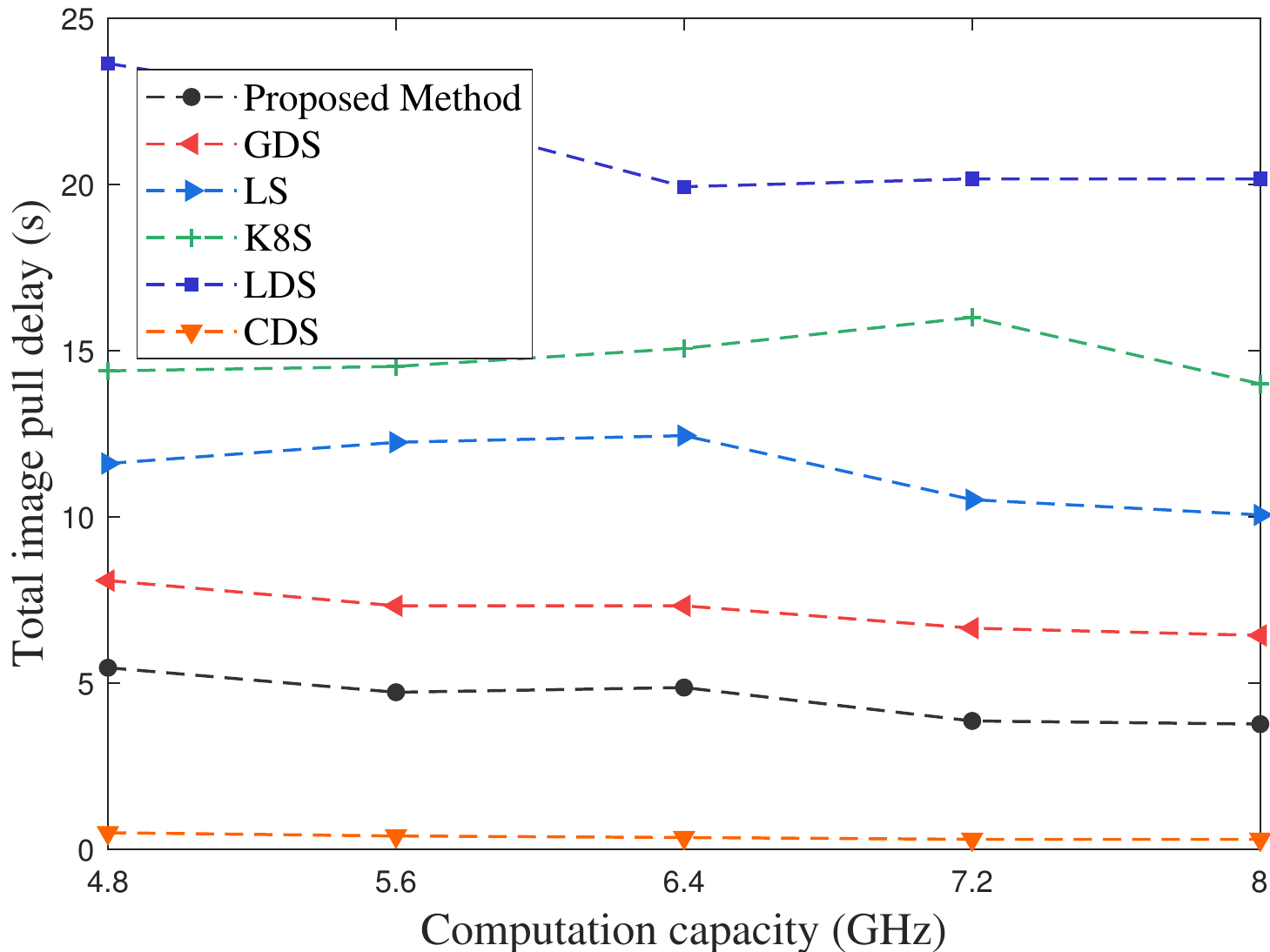}
    \label{experiment2_2_4}
  }
  \caption{Delay and overhead with different computing capacity}
  \label{experiment2_3}
\end{figure}

\begin{figure}[t]
  \centering
  \subfigure[$\#1$]{
    \centering
    \includegraphics[width=0.23\textwidth]{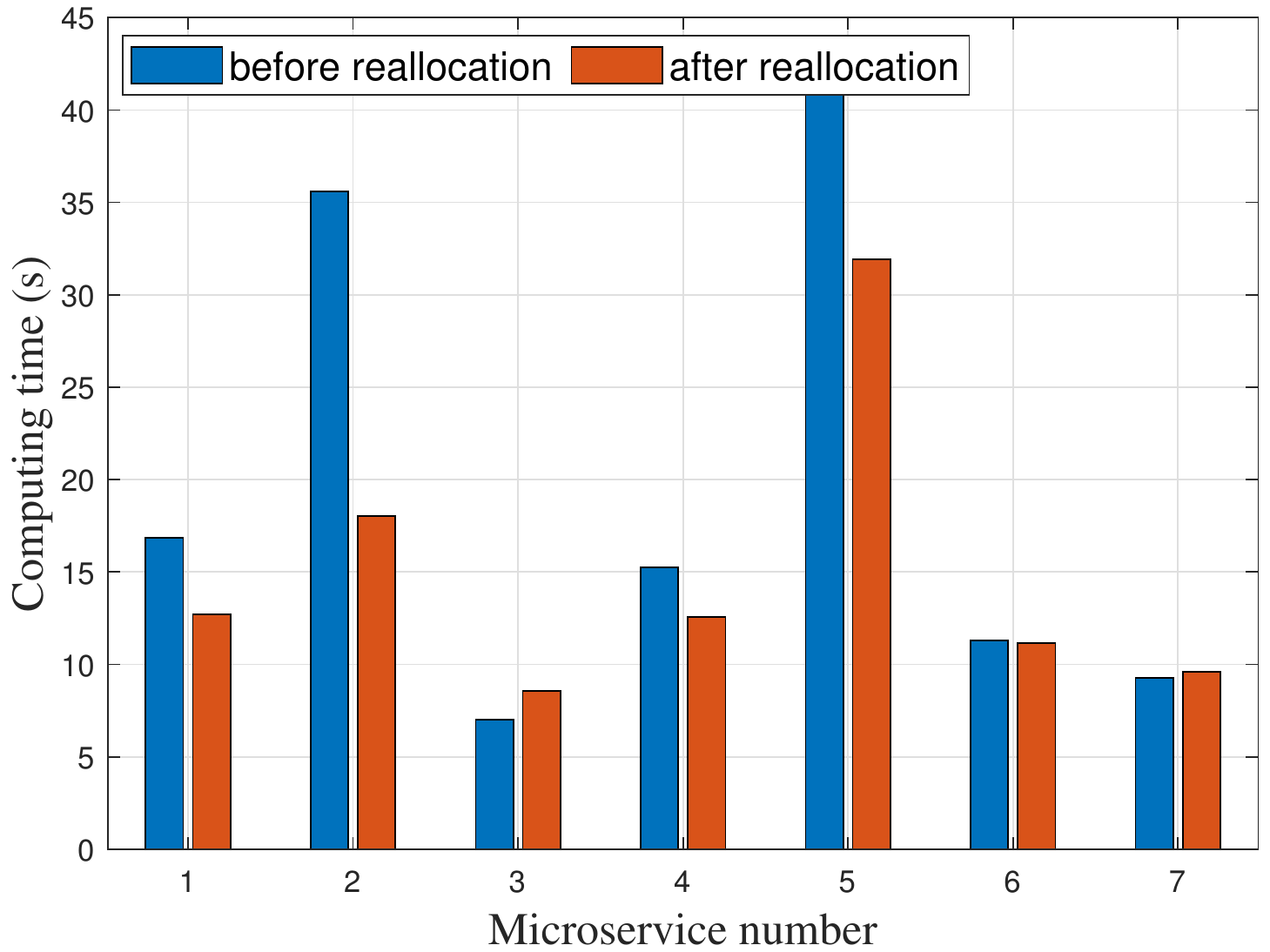}
    \label{experiment2_4_1}
  }
  \subfigure[$\#2$]{
    \centering
    \includegraphics[width=0.23\textwidth]{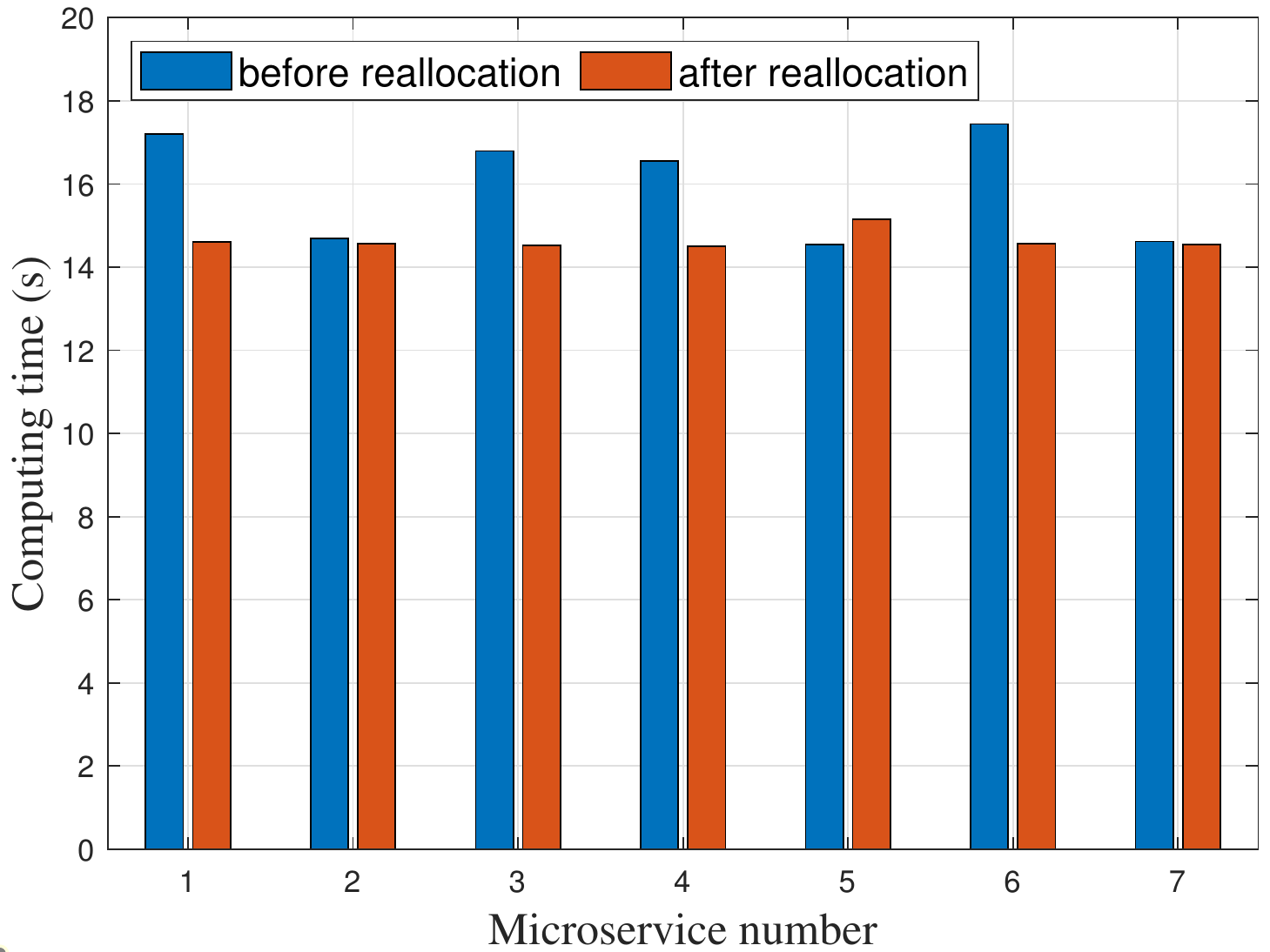}
    \label{experiment2_4_2}
  }
  \caption{The impact of resource reallocation strategy on the computing time of microservices}
  \label{experiment2_4}
\end{figure}

Fig. \ref{experiment1_1_3} shows the objective value with different $\theta$ when there are 9 servers and 36 microservices on average. It simulates the effect of different weights for microservice image pull delay and communication overhead. In this figure, the function values of the LDS method, LS method, and K8S method change linearly because they only consider layer sharing of the objective function. And CDS method only considers chain sharing of the objective function. So the changes of $\theta$ can not impact the deployment strategy. The proposed method can achieve optimal results no matter what value $\theta$ takes. When $\theta = 0$, there is only the chain sharing part in the objective function, and the objective function value is the same as the CDS method. When $\theta = 1$, there is only the layer sharing part in the objective function, and the objective function value is the same as the LDS method.

Fig. \ref{experiment1_2} shows the image pull delay and communication overhead with different $\theta$. Since the LDS, CDS, LS, and K8S methods are unaffected by $\theta$, the values of these two strategies do not change a lot in the two figures. The fluctuation of the line is more due to randomly generated microservice data. Fig. \ref{experiment1_2_1} shows the total image pull delay with different $\theta$. The higher the weight $\theta$, the lower the image pull delay of both the proposed strategy and the GDS method. When $\theta = 1$, the proposed deployment strategy can achieve the same result as the LDS method. The image pull delay can be reduced by 140s compared to the CDS method. The proposed strategy can reduce the total image pull delay by 52s on average compared to the GDS method. Fig. \ref{experiment1_2_2} shows the total communication overhead with different $\theta$. The lower the weight $\theta$, the lower the total communication overhead of both the proposed strategy and the GDS method. When $\theta = 0$, the proposed deployment strategy can achieve the same result as the CDS method. The total communication overhead can be reduced by 80 MB compared to the LDS method. The proposed strategy can reduce total communication overhead by 10 MB on average compared to the GDS method.

\subsection{Experiment with Real Edge Servers}

\subsubsection{Experimental environment}

We further conduct experiments with five edge servers in real world to evaluate the effectiveness of our method. The servers are shown in Fig. \ref{experiment2_1_1} and the topology of servers is shown in Fig. \ref{experiment2_1_2}. Each server has an i5-8250U CPU, 8G RAM, and is equipped with Docker CE. Communication between servers is carried out using the TCP protocol. We select 23 microservices from Docker Hub. The image size of these microservices is in the range of 1.24-1098 MB, and the computation resource requirement is in the range of 0.2-1.4 GHz. The number of layers of each microservice is in the range of 4-11. We simulate servers with different storage and resource constraints by limiting the resource usage of docker. The experimental results are shown in the following figures. Each data point in the figures is the average of multiple experiments.

\subsubsection{Experimental results}

Fig. \ref{experiment2_2} shows the total image pull delay and communication overhead under different storage capacities with $\theta = 0.5$. As can be seen from Fig. \ref{experiment2_2_1}, the results of the proposed method and the LDS method are very close. When storage capacity becomes more and more sufficient, the proposed method can significantly reduce the image pull delay compared with other methods. This is because when the storage resources are sufficient, more microservices with the same layer can be deployed on one edge server. It can reduce the size of the image layer to be pulled and reduce the delay. In Fig. \ref{experiment2_2_2}, the CDS method can achieve the lowest communication overhead because it is optimized for the microservice chain. Compared with other methods except the CDS method, the proposed method can achieve lower communication overhead.

Fig. \ref{experiment2_3} shows the image pull delay and communication overhead under different computing capacities with $\theta = 0.5$. The trend is the same as that in Fig. \ref{experiment2_2}. From Fig. \ref{experiment2_2} and Fig. \ref{experiment2_3},  we can also find that the LDS method has the best effect on image pull delay and the worst effect on communication overhead, and the CDS method is the opposite. This is the disadvantage of not considering both aspects comprehensively. Different from other methods, our proposed method can always obtain a better solution that can better balance the image pull delay and communication overhead.

\begin{table*}[tbp]
	\centering 
	\caption{Data of Resource Reallocation}
	\label{tabel_reallocation}
	%字母的个数对应列数，|代表分割线
	% l代表左对齐，c代表居中，r代表右对齐
  \renewcommand{\arraystretch}{1.5}
	\begin{tabular}{c|c|c|c|c|c|c|c|c|c} \hline \hline
    \makecell[c]{\textbf{Microservice} \\ \textbf{number ($\#1$)}} & \makecell[c]{\textbf{computing} \\ \textbf{resources}\\ \textbf{(GHz)}} &  \makecell[c]{\textbf{time} \\ \textbf{(s)}} & \makecell[c]{\textbf{resources after} \\ \textbf{reallocation} \\ \textbf{(GHz)}} & \makecell[c]{\textbf{time} \\ \textbf{(s)}} & \makecell[c]{\textbf{Microservice} \\ \textbf{number ($\#2$)}} & \makecell[c]{\textbf{computing} \\ \textbf{resources}\\ \textbf{(GHz)}} & \makecell[c]{\textbf{time} \\ \textbf{(s)}} & \makecell[c]{\textbf{resources after} \\ \textbf{reallocation} \\ \textbf{(GHz)}} & \makecell[c]{\textbf{time} \\ \textbf{(s)}}\\ \hline
     1 & 0.4 & 16.84 & 0.576 & 12.715 & 1 & 0.4 & 17.203 & 0.554 & 14.603\\ \hline
     2 & 0.6 & 35.6 & 0.864 & 18.02 & 2 & 1 & 14.687 & 1.385 & 14.563\\ \hline
     3 & 1.2 & 7 & 1.728 & 8.572 & 3 & 0.6 & 16.791 & 0.83 & 14.516\\ \hline
     4 & 0.8 & 15.26 & 1.152 & 12.567 & 4 & 0.8 & 16.558 & 1.1 & 14.508\\ \hline
     5 & 0.4 & 41.4 & 0.576 & 31.934 & 5 & 1.2 & 14.537 & 1.66 & 15.156\\ \hline
     6 & 0.7 & 11.31 & 1.008 & 11.171 & 6 & 0.6 & 17.443 & 0.83 & 14.566\\ \hline
     7 & 0.9 & 9.26 & 1.296 & 9.613 & 7 & 0.6 & 14.617 & 0.83 & 14.544\\ \hline
     total & 5 & 136.67 & 7.2 & 104.592 & total & 5.2 & 111.836 & 7.2 & 102.456\\ \hline
    \hline
	\end{tabular}
\end{table*}

\begin{figure*}[htbp]
  \subfigure[Network topology]{
    \centering
    \includegraphics[width=0.2\textwidth]{figures/experiment4_1_1.pdf}
    \label{experiment4_1_1}
  }
  \subfigure[Objective function values]{ 
    \centering
    \includegraphics[width=0.24\textwidth]{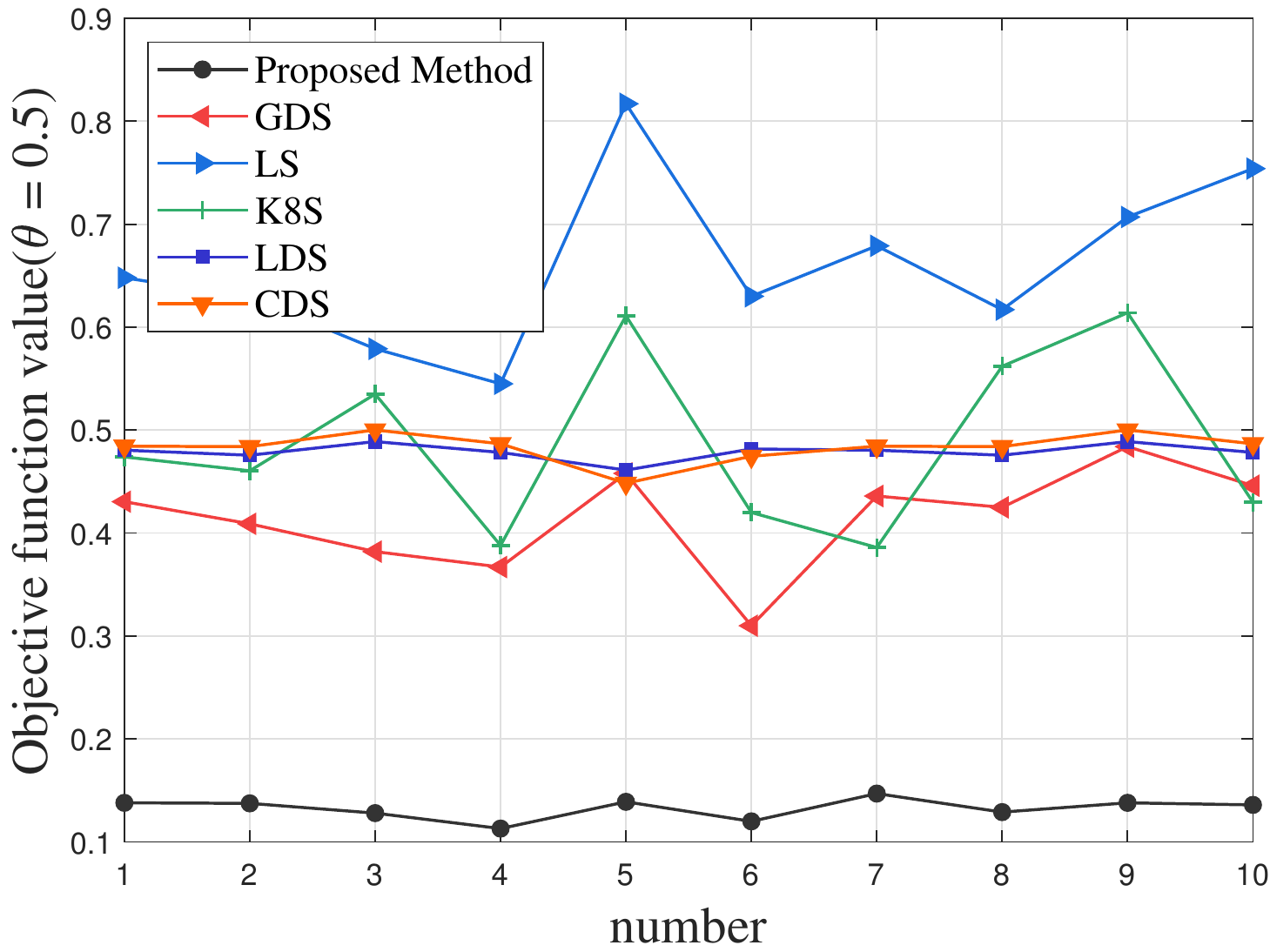}
    \label{experiment4_1_2}
  }
  \subfigure[Total image pull delay]{
    \centering
    \includegraphics[width=0.24\textwidth]{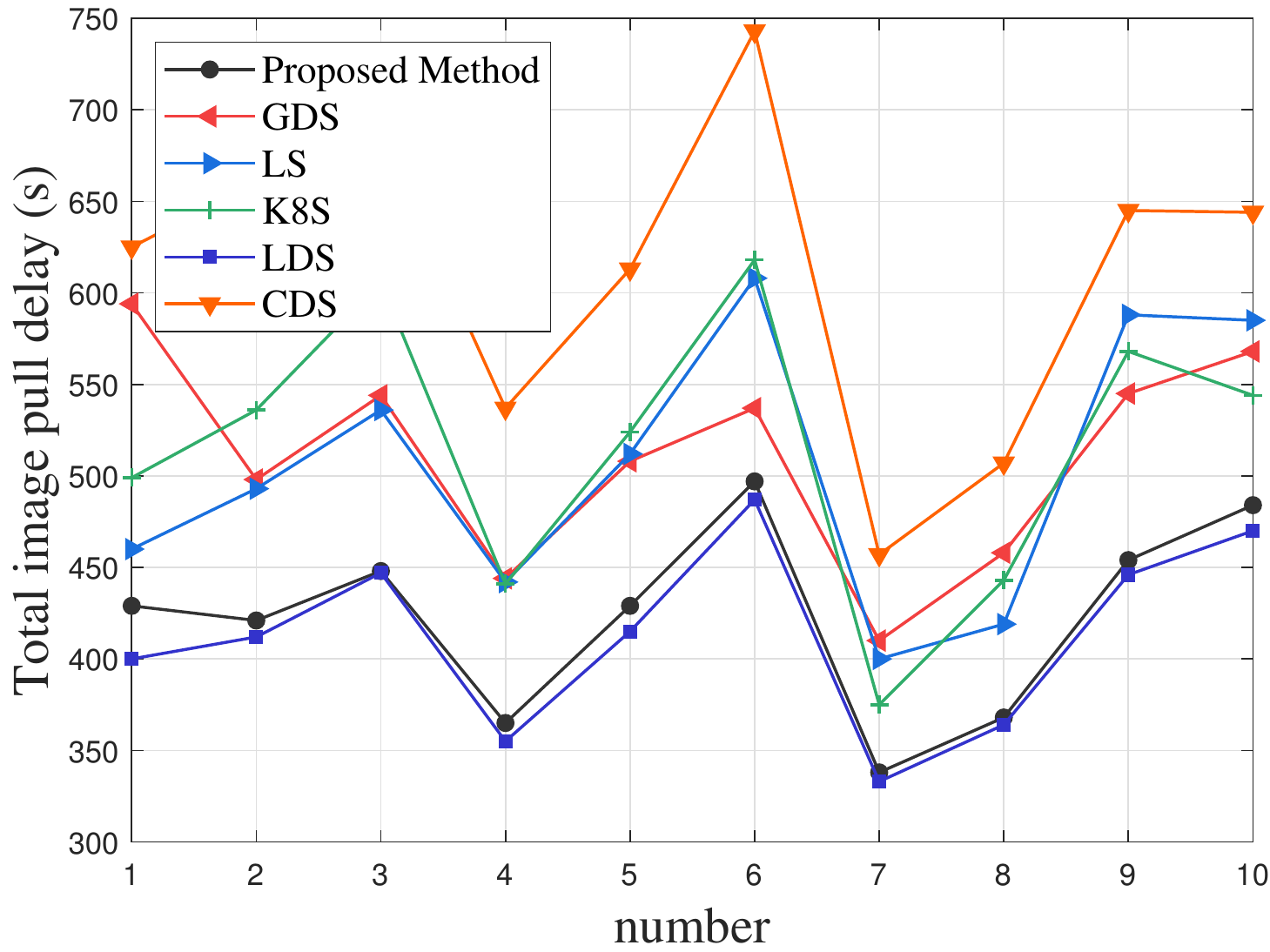}
    \label{experiment4_1_3}
  }
  \subfigure[Total communication overhead]{
    \centering
    \includegraphics[width=0.24\textwidth]{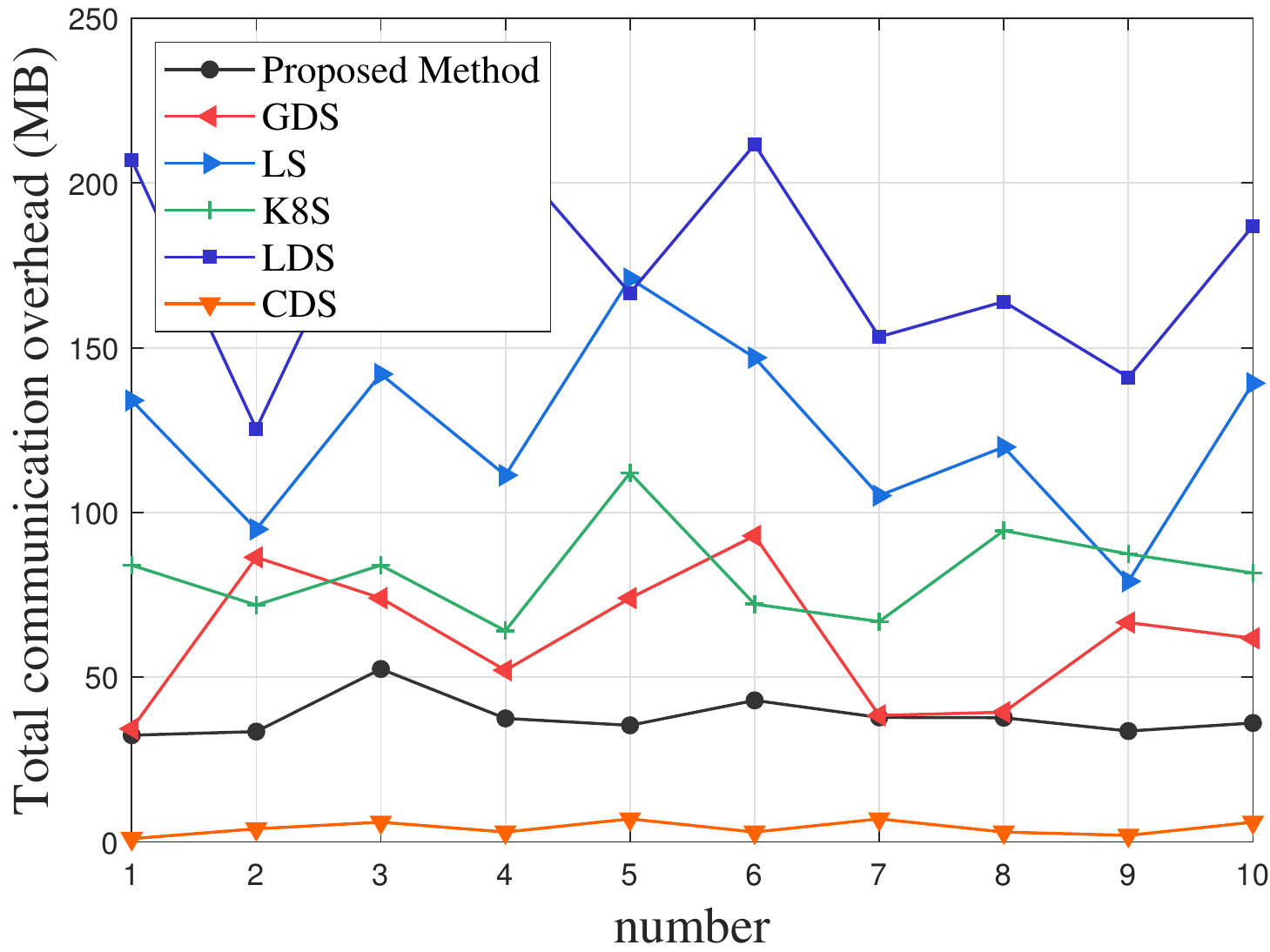}
    \label{experiment4_1_4}
  }
  \caption{Figure and topology of edge servers}
  \label{experiment4_1}
\end{figure*}

Fig. \ref{experiment2_4} shows the impact of the resource reallocation strategy on the completion time of tasks. We selected two from five servers and showed the effect of resource reallocation on their task computing time. The data are shown in Table \ref{tabel_reallocation}. We can see that resource reallocation can significantly reduce the completion time of computing tasks. In Table \ref{tabel_reallocation}, the total task completion time of the two servers is reduced from 136.67 seconds and 111.836 seconds to 104.592 seconds and 102.456 seconds, which is a reduction of 23.7\% and 8.4\%, respectively. This shows that the resource reallocation strategy can effectively reduce the completion time of computing tasks and improve computing efficiency.

\subsection{Large-scale Cases}
To evaluate the performance of our proposed method in a larger scale scenario \cite{guLayerawareCollaborativeMicroservice2022}, we consider the topology of US NSFNET consisting of 15 edge servers as shown in Fig. \ref{experiment4_1_1}. The storage resources of each edge server are 16 GB, the computing resources are 4-core 1.6 GHz, and the bandwidth is 80-120 MB/s. We considered up to 105 microservices selected from Docker Hub and randomly combined them into applications. Other parameter settings are the same as in the previous experiments.

\subsubsection{Experimental results}

% 为了验证算法的稳定性，我们进行了十次实验，将每次实验的数据呈现在了图10bcd中。图10展示了十次实验的目标函数值变化，可以看出，所提出的方法波动很小，能够达到最低的目标函数值。在图cd中，由于每次实验的微服务组合是随机的，因此结果变化比较剧烈。所提出的算法在镜像拉取延迟上可以获得与层共享部署策略基本一致的结果，在总通讯开销上也可以达到与链共享部署策略近似的结果，稳定性也优于其他方案。

To verify the stability of the proposed method, we conducted ten experiments, and the data for each experiment are presented in Fig. \ref{experiment4_1}. Fig. \ref{experiment4_1_2} shows the objective function value for ten experiments, it can be seen that the proposed method has little fluctuation and can achieve the lowest objective function value. Since the microservice composition of each experiment is random, the results vary drastically in Fig. \ref{experiment4_1_3} and Fig. \ref{experiment4_1_4}. The proposed method can achieve almost the same results as the LDS method in terms of image pull delay. It can also achieve similar results to the CDS method in terms of total communication overhead. The results of the proposed method are also better than other schemes.

% 图11展示了5，9，15台服务器数量下的镜像拉取延迟和通讯开销占基线的比率。
Fig. \ref{experiment4_2} shows the ratio of image pull delay and communication overhead to the baseline under different servers and different $\theta$. The baseline of image pull delay is defined as the total data size in the absence of layer sharing divided by the average bandwidth of servers. The lower the ratio is, the higher the layer sharing rate will be. As can be seen from the figure, our proposed method can significantly reduce the image pull delay. The image pull delay in the best case is only 56\% of the baseline. When $\theta = 0.5$, an average of 65\% of the baseline ratio can be achieved in the proposed method. The baseline of communication overhead is defined as the summation of all microservice communication data. This ratio can reflect the average number of hops between microservices. The lower the ratio is, the higher the chain sharing rate will be. As can be seen from the figure, our proposed method can significantly reduce communication overhead. The communication overhead in the best case is only 5\% of the baseline. When $\theta = 0.5$, an average of 30\% of the baseline value ratio can be achieved in the proposed method. Moreover, The image pull delay and communication overhead can be reduced significantly in any server numbers, which proves the stability of our proposed method.

\begin{figure}[t]
  \centering
  \includegraphics[width=0.4\textwidth]{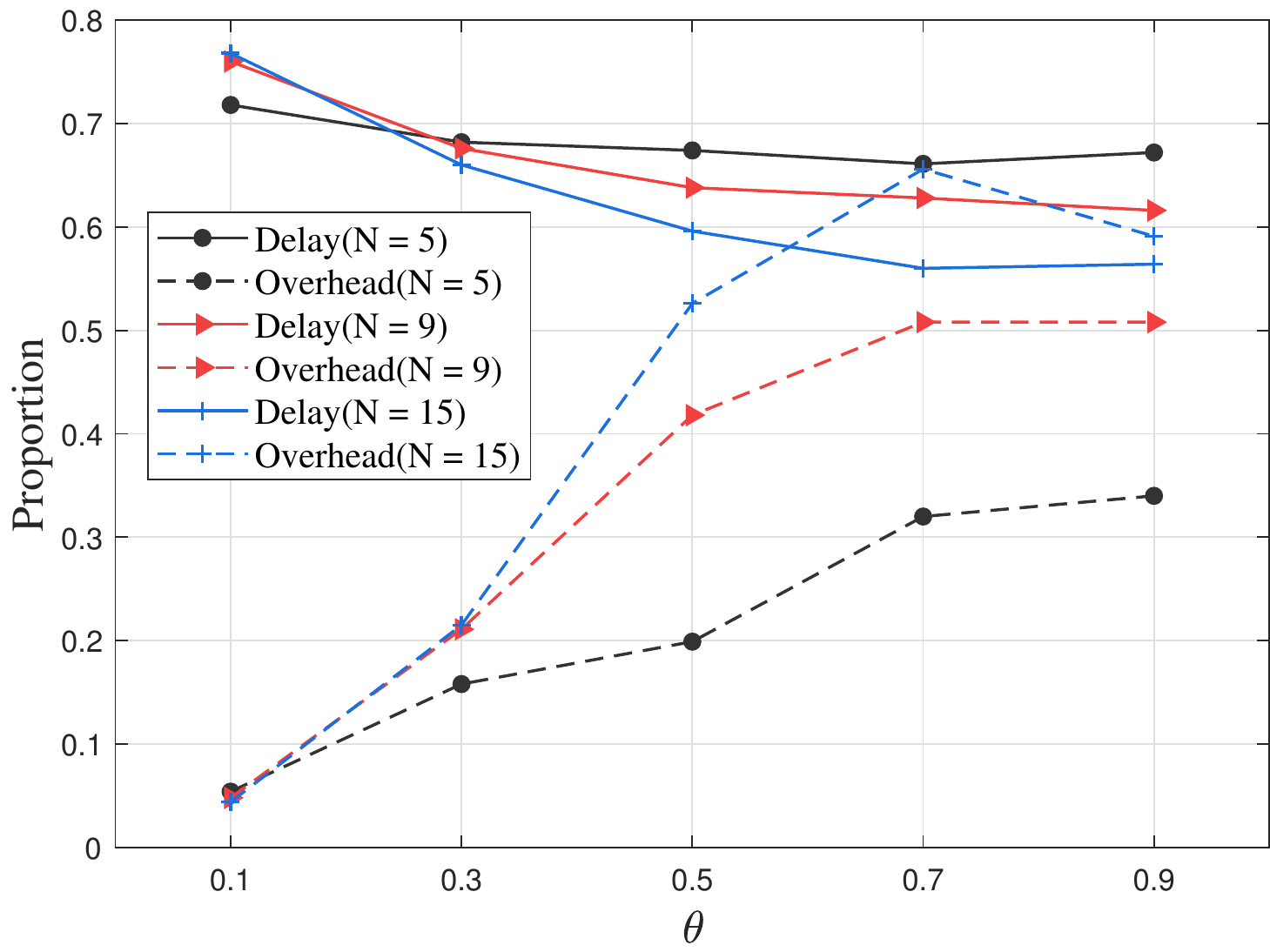}
  \caption{The ratio of image pull delay and communication overhead to baseline in different condition}
  \label{experiment4_2}
\end{figure}

Fig. \ref{experiment4_3} shows the space and time consumption of the proposed method and GDS method to get the deployment strategy under different servers. With the increase in the number of servers and microservices in the network, the amount of layers and the communication between microservices is also increasing. Then the network structure becomes more and more complex. When the amount of servers is less than 10, the computing time of the proposed method is less than 20 seconds and the space occupation is less than 100 MB, which has excellent solution efficiency. In a large-scale server network, the solution time will slow down to about 450 seconds, and the space occupation will also increase to 1800 MB due to the complexity of the network. The optimal solution can still be solved in an acceptable time because the scheduling strategy of microservices does not change frequently in large-scale production. However, since the proposed method is based on solving quadratic programming problems, the time and space complexity is higher than that of Algorithm \ref{alg:Simple}. It is our future work to further optimize the time complexity and space complexity of the proposed method.

% 前面那张图可以先不用路由那张图，路由那张图可以考虑为下一篇论文里面的
% 这类可以考虑上服务器、设备，设备产生应用程序，应用程序通过微服务链来进行处理，简单一点就可以

\section{Discussion} \label{discussion}
 
In this section, we will discuss the limitations of the proposed method and the future work.

\begin{figure}[t]
  \centering
  \includegraphics[width=0.4\textwidth]{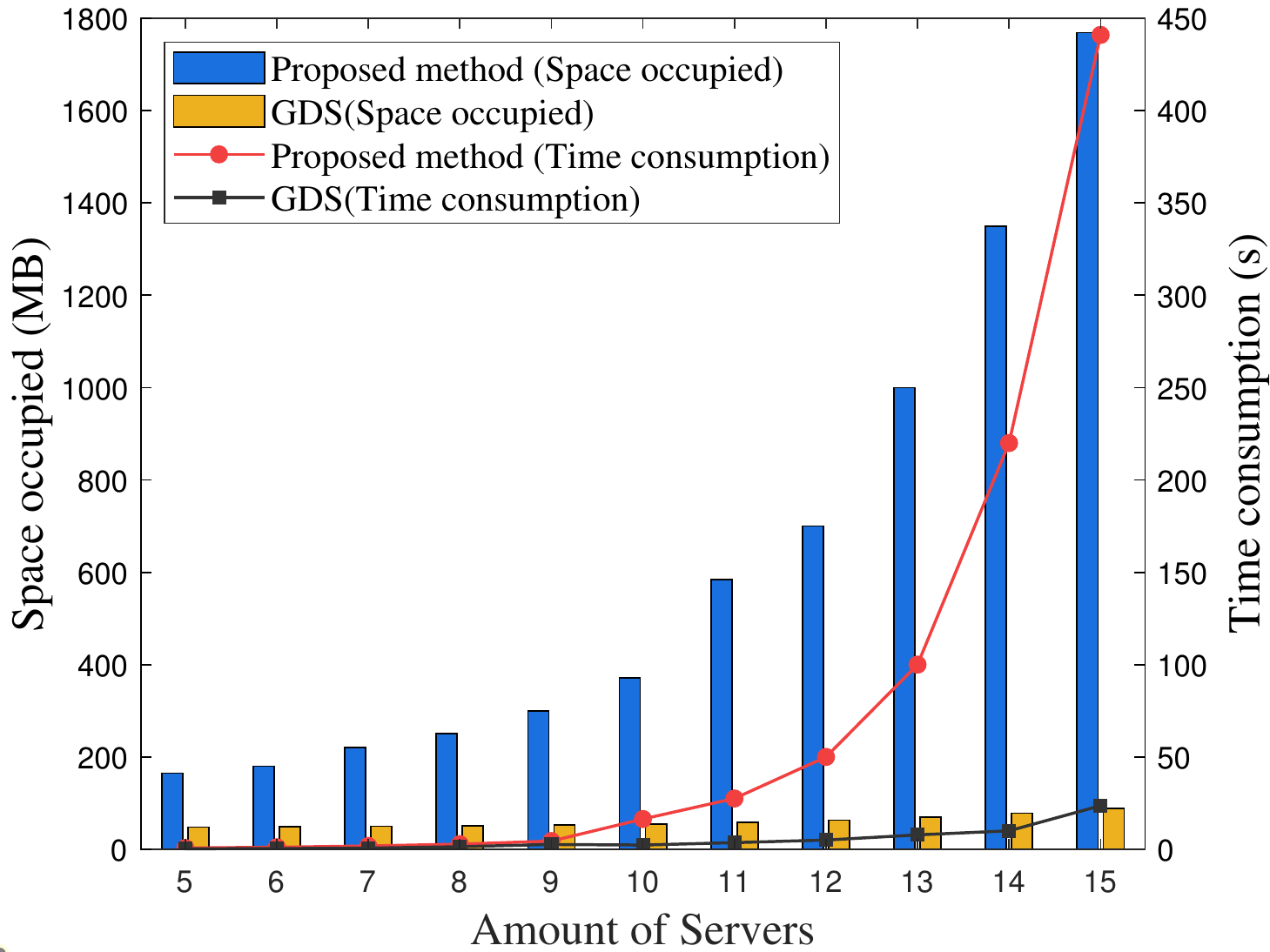}
  \caption{Space and time consumption under different amount of servers}
  \label{experiment4_3}
\end{figure}

Compared with related works, our proposed method is able to optimize the communication overhead while optimizing the image pull delay. The experiments in Section. \ref{performance} also show that the proposed method can achieve the lowest objective function value with a good trade-off between delay and overhead. However, the time and space complexity of this method is high in large-scale scenarios. Although the optimal solution can be obtained within an acceptable time, further optimization is required in the future.

The communication overhead depends not only on the amount of communication data, but also on the request frequency. If the request frequency is high, the communication overhead will be very large even if the amount of communication data at one time is small. Therefore, we believe that the calculation of communication overhead should consider the request frequency, that is, to consider the product of the communication data volume and the request frequency. However, since our method is based on microservices not requests, we cannot directly get the request frequency. Therefore, we believe that future work can consider the request frequency as an input, and then use the product of the request frequency and the amount of communication data as the calculation of communication overhead.

Deployed microservices are not static, and the entire production process will include shutdown, migration, and startup of new microservices. In the face of the dynamic microservice deployment process, it is necessary to have corresponding deployment algorithms to adapt to dynamic scenarios. One possible future research direction is to use artificial intelligence or other methods for training after the initial deployment results such that the microservices can be dynamically adjusted.

\section{Conclusion} \label{conclusion}

In this paper, we study the layer sharing and chain sharing of microservices and explore a microservice deployment scheme that can balance the two ways of resource sharing. We build an image pull delay and communication overhead minimization problem. We transform the problem into a linearly constrained integer quadratic programming problem through model reconstruction and obtain the deployment strategy through a successive convex approximation (SCA) method. Further, we propose a resource reallocation algorithm to fully utilize the idle resources of the server. Experimental results show that the proposed deployment strategy can balance the two resource sharing methods of microservices. When considering the two sharing methods in balance, the average image pull delay can be reduced to 65\% of the baseline, and the average communication overhead can be reduced to 30\% of the baseline. In the future, we will expand microservice deployment from static scenarios to highly dynamic scenarios and try to obtain rapid solution algorithms in large-scale scenarios.

% use section* for acknowledgment
\section*{Acknowledgment}

This work was supported by the National Key Research and Development Program of China (Grant No.2018YFB1702300), and in part by the NSF of China (Grants No. 61731012, 62025305, 61933009, and 92167205).

\appendices
\section{Matrix value} \label{matrix}

\begin{enumerate}
  \item 
  \begin{equation}
    \mathcal{Q}=\begin{bmatrix}\mathbf{Q}^1&\cdots&0\\\vdots&\ddots&\vdots\\0&\cdots&\mathbf{Q}^K\\\end{bmatrix}_{K\times K}
  \end{equation}
  where 
  \begin{align}
  \mathbf{Q}^\mathbf{k}&=\begin{bmatrix}\mathbf{q}&\cdots&0\\\vdots&\ddots&\vdots\\0&\cdots&\mathbf{q}\\\end{bmatrix}_{A_k\times A_k} \nonumber\\
  \mathbf{q}&= [1,1,\cdots,1]_{1\times N} \nonumber
  \end{align}
  \item
  \begin{equation}
    \mathbf{b}_1=[1,1,\cdots,1]_{1\times\sum_{k\in\mathbb{K}} A_k}^T
  \end{equation}
  \item 
  \begin{equation}
    \mathbf{H} = \begin{bmatrix}\mathbf{H}^1  &\cdots & 0 \\\vdots & \ddots & \vdots \\0 & \cdots & \mathbf{H}^K \\\end{bmatrix}_{K \times K}
  \end{equation}
  where 
  \begin{equation}
    \mathbf{H}^k = [\mathbf{I}_{n \times n} \quad \mathbf{0} \quad \cdots \quad \mathbf{0} ]_{1 \times A_k}\nonumber
  \end{equation}
  \item 
  \begin{equation}
    \mathbf{b}_2 = [\mathbf{x}^{1,0}, \cdots, \mathbf{x}^{K0}]^T
  \end{equation}
  where $\mathbf{x}^{k0} = [0,\cdots,1,\cdots,0]^T$, $x^{k0}_{N^k} = 1$
  \item 
  \begin{equation}
    \mathbf{Y}=\left[\mathbf{Y}_1,\cdots,\mathbf{Y}_N\right]^T
  \end{equation}
  where
  \begin{align}
    \mathbf{Y}_n&=\begin{bmatrix}\mathbf{P}_n^1\mathbf{V}_1^1\mathbf{E}^1&\cdots&\mathbf{P}_n^1\mathbf{V}_L^1\mathbf{E}^1\\\vdots&\ddots&\vdots\\\mathbf{P}_n^K\mathbf{V}_1^K\mathbf{E}^K&\cdots&\mathbf{P}_n^K\mathbf{V}_L^K\mathbf{E}^K\\\end{bmatrix}_{K\times L} \nonumber\\
    \mathbf{P}_n^k&=\begin{bmatrix}\mathbf{p}_n^N&\cdots&0\\\vdots&\ddots&\vdots\\0&\cdots&\mathbf{p}_n^N\\\end{bmatrix}_{A_k\times A_k} \nonumber \\
    \mathbf{p}_n^N &=[0,\cdots,0,1,0,\cdots,0]_{1\times N}^T , \mathbf{p}_n^N(n) = 1 \nonumber\\
    \mathbf{q}&= [1,1,\cdots,1]_{1\times N} \nonumber \\
    \mathbf{V}_l^k&=\begin{bmatrix}{{(\mathbf{p}}_l^L)}^T&\cdots&0\\\vdots&\ddots&\vdots\\0&\cdots&{{(\mathbf{p}}_l^L)}^T\\\end{bmatrix}_{A_k\times A_k}\nonumber \\
    \mathbf{p}_l^L&= [0,\cdots,0,1,0,\cdots,0]_{1\times L}^T, \mathbf{p}_l^L(l) = 1 \nonumber\\
    \mathbf{E}^k&=\left[\left(\mathbf{E}^{k1}\right)^T,\cdots,\left(\mathbf{E}^{kA_k}\right)^T\right]^T \nonumber\\
    \mathbf{E}^{ki}&=\left[E^{ki1},\cdots,E^{kiL}\right]^T \nonumber
  \end{align}
  \item 
  \begin{equation}
  \mathcal{S}=\begin{bmatrix}\mathbf{S}^T&\cdots&0\\\vdots&\ddots&\vdots\\0&\cdots&\mathbf{S}^T\\\end{bmatrix}_{N\times N}
  \end{equation}
  where 
  \begin{equation}
  \mathbf{S}=\left[S^1,\cdots,S^L\right]^T \nonumber 
  \end{equation}
  \item 
  \begin{equation}
  \mathbf{C}^S=\left[C_1^S,\cdots,C_N^S\right]^T
  \end{equation}
  \item 
  \begin{equation}
  \mathcal{G}=\left[\mathbf{G}_1,\cdots,\mathbf{G}_N\right]^T
  \end{equation}
  where 
  \begin{align}
    \mathbf{G}_n&=\left[\left(\mathbf{P}_n^1\mathbf{u}^1\right)^T,\cdots,\left(\mathbf{P}_n^K\mathbf{u}^K\right)^T\right]^T \nonumber \\
    \mathbf{u}^k&=\left[u^{k1},\cdots,u^{kA_k}\right]^T \nonumber 
  \end{align}
  \item 
  \begin{equation}
    \mathbf{C}^C = \left[C_1^C, \cdots, C_N^C\right]^T
  \end{equation}
\end{enumerate}

% you can choose not to have a title for an appendix
% if you want by leaving the argument blank
\section{Proof of Theorem 1} \label{proof1}

\begin{proof}
We remove the bolding of all letters in this proof to simplify the expression. We combine the same parts of the constraints in $\text{P3}$ and rewrite it as

\begin{align}
  \min F(x,d) &= C_1Md + \frac{1}{2}C_2x^TPx \nonumber \\ &\qquad - C_2\bar{x}^TNx + \frac{1}{2}C_2\bar{x}^TN\bar{x} \nonumber \\
  \text{s.t.} \quad& a_i^Tx= b_i, i=1, \cdots, m \nonumber \\
  & a_i^Tx \leqslant b_i, i=m, \cdots, m+l \nonumber \\
  & c_i^Ty \leqslant g_i, i=1, \cdots, n \nonumber \\
  & e_ix + f_iy \leqslant 0, i=1, \cdots, p \nonumber
\end{align}

Its KKT condition is

\begin{equation}
  \begin{cases}
  C_2P\bar{x} - C_2N\bar{x} + \sum_{i=1}^{m+l}\lambda_ia_i + \sum_{i=1}^{k}\nu e_i = 0 \\
  C_1M + \sum_{i=1}^{n}\mu_ic_i + \sum_{i=1}^{k}\nu e_i = 0 \\
  a_i^Tx= b_i, i=1, \cdots, m \\
  a_i^Tx \leqslant b_i, i=m, \cdots, m+l \\
  c_i^Ty \leqslant  g_i, i=1, \cdots, n \\
  e_ix + f_iy \leqslant 0, i=1, \cdots, p\\
  \lambda_i \geqslant 0 , i=1, \cdots, m + l\\
  \mu_i \geqslant 0 , i=1, \cdots, n\\
  \nu_i \geqslant 0 , i=1, \cdots, p\\
  \lambda_i(a_i^Tx - b_i) = 0, i=m, \cdots, m+l \\
  \mu_i(c_i^Ty - g_i) = 0, i=1, \cdots, n \\
  \nu_i(e_ix + f_iy) = 0, i=1, \cdots, p
  \end{cases}
\end{equation}

Since $P\bar{x} - N\bar{x} = Q\bar{x}$, so

\begin{equation}
  \begin{cases}
  C_2Q\bar{x} + \sum_{i=1}^{m+l}\lambda_ia_i + \sum_{i=1}^{k}\nu e_i = 0 \\
  C_1M + \sum_{i=1}^{n}\mu_ic_i + \sum_{i=1}^{k}\nu e_i = 0 \\
  a_i^Tx= b_i, i=1, \cdots, m \\
  a_i^Tx \leqslant b_i, i=m, \cdots, m+l \\
  c_i^Ty \leqslant  g_i, i=1, \cdots, n \\
  e_ix + f_iy \leqslant 0, i=1, \cdots, p\\
  \lambda_i \geqslant 0 , i=1, \cdots, m + l\\
  \mu_i \geqslant 0 , i=1, \cdots, n\\
  \nu_i \geqslant 0 , i=1, \cdots, p\\
  \lambda_i(a_i^Tx - b_i) = 0, i=m, \cdots, m+l \\
  \mu_i(c_i^Ty - g_i) = 0, i=1, \cdots, n \\
  \nu_i(e_ix + f_iy) = 0, i=1, \cdots, p
  \end{cases}
\end{equation}

Therefore, $\bar{x}$ is the KKT point of $\text{P3}$, the non-global solution of the original problem can be obtained.
\end{proof}

\section{Proof of Theorem 2} \label{proof2}

\begin{proof}
We remove the bolding of all letters in this proof to simplify the expression. We can find that the search direction of the algorithm is $\lambda_x^{r+1} = z_x^{r+1} - x^{r},\lambda_y^{r+1} = z_d^{r+1} - d^{r}$. According to the convexity of $\text{P4}$, we can get

\begin{equation}
\begin{array}{l}
  U(x^{r+1},d^{r+1}) \leqslant U_{qp}(x^{r+1},d^{r+1};x^{r},d^{r}) \\
  \leqslant \alpha U_{qp}(x^{r+1},z_d^{r+1};x^{r},d^{r}) \\ \quad + (1-\alpha)U_{qp}(x^{r+1},d^r;x^{r},d^{r}) \\
  \leqslant U_{qp}(x^{r+1},d^r;x^{r},d^{r}) \\
  \leqslant \alpha U_{qp}(z_x^{r+1},d^r;x^{r},d^{r}) + (1-\alpha)U_{qp}(x^r,d^r;x^r,d^r) \\
  \leqslant U_{qp}(x^r,d^r;x^r,d^r) \\
  = U(x^{r},d^{r})
\end{array}
\end{equation}
where the second and the fourth inequality sign come from the convexity of $U_{qp}(\cdot;x^r,y^r)$. Then the value of the objective function must not be monotone increasing.

\begin{equation}
\begin{array}{l}
  U(x^r,d^r) - U(x^{r+1},d^{r+1}) \\
  = U(x^r,d^r) - U(x^{r+1},d^{r}) \\ \quad + U(x^{r+1},d^{r}) - U(x^{r+1},d^{r+1}) \\
  = U(x^r,d^r) - U(x^{r} + \alpha(z_x^{r+1} - x^{r}),d^{r}) \\ \quad + U(x^{r+1},d^{r}) - U(x^{r+1} ,d^{r} + \alpha(z_d^{r+1} - d^{r})) \\
  \geqslant -\alpha U'(x^r,d^r;\lambda_x^{r+1}) - \alpha U'(x^{r+1},d^r;\lambda_d^{r+1}) \\
  \geqslant 0
\end{array}
\end{equation}

Then we can get 

\begin{equation}
\begin{array}{l}
  U(x^*,d^*) - U(x^0,d^0) \\
  \leqslant U(x^r,d^r)  - U(x^0,d^0) \\
  \leqslant \sum_{r=0}^r\alpha(U'(x^r,d^r;\lambda_x^{r+1}) +U'(x^{r+1},d^r;\lambda_d^{r+1}))
\end{array}
\end{equation}

thereby

\begin{equation}
\lim_{r \rightarrow \infty }U'(x^r,d^r;\lambda_x^{r+1}) = \lim_{k \rightarrow \infty}U'(x^r,d^r;\lambda_y^{r+1})  = 0
\end{equation}

So problem $\text{P3}$ can get a stationary solution by SCA algorithm. We can choose $\alpha=1$ for integer variables, and it still holds.
\end{proof}

% Can use something like this to put references on a page
% by themselves when using endfloat and the captionsoff option.
\ifCLASSOPTIONcaptionsoff
  \newpage
\fi

% trigger a \newpage just before the given reference
% number - used to balance the columns on the last page
% adjust value as needed - may need to be readjusted if
% the document is modified later
%\IEEEtriggeratref{8}
% The "triggered" command can be changed if desired:
%\IEEEtriggercmd{\enlargethispage{-5in}}

% references section

% can use a bibliography generated by BibTeX as a .bbl file
% BibTeX documentation can be easily obtained at:
% http://mirror.ctan.org/biblio/bibtex/contrib/doc/
% The IEEEtran BibTeX style support page is at:
% http://www.michaelshell.org/tex/ieeetran/bibtex/
%\bibliographystyle{IEEEtran}
% argument is your BibTeX string definitions and bibliography database(s)
%\bibliography{IEEEabrv,../bib/paper}
%
% <OR> manually copy in the resultant .bbl file
% set second argument of \begin to the number of references
% (used to reserve space for the reference number labels box)
\bibliographystyle{IEEEtran}
\bibliography{mylib}

% biography section
% 
% If you have an EPS/PDF photo (graphicx package needed) extra braces are
% needed around the contents of the optional argument to biography to prevent
% the LaTeX parser from getting confused when it sees the complicated
% \includegraphics command within an optional argument. (You could create
% your own custom macro containing the \includegraphics command to make things
% simpler here.)
%\begin{IEEEbiography}[{\includegraphics[width=1in,height=1.25in,clip,keepaspectratio]{mshell}}]{Michael Shell}
% or if you just want to reserve a space for a photo:

% \begin{IEEEbiography}{Michael Shell}
% Biography text here.
% \end{IEEEbiography}

% if you will not have a photo at all:
% \begin{IEEEbiographynophoto}{John Doe}
% Biography text here.
% \end{IEEEbiographynophoto}

% insert where needed to balance the two columns on the last page with
% biographies
%\newpage

% \begin{IEEEbiographynophoto}{Jane Doe}
% Biography text here.
% \end{IEEEbiographynophoto}

% You can push biographies down or up by placing
% a \vfill before or after them. The appropriate
% use of \vfill depends on what kind of text is
% on the last page and whether or not the columns
% are being equalized.

%\vfill

% Can be used to pull up biographies so that the bottom of the last one
% is flush with the other column.
%\enlargethispage{-5in}

\begin{IEEEbiography}[{\includegraphics[width=1in,height=1.25in,clip,keepaspectratio]{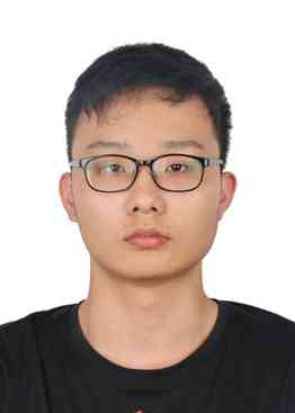}}]{Yuxiang Liu}
	received the B.Eng. degree in control science and engineering from Shanghai Jiao Tong University, Shanghao, China, in 2020. 
	
	He is currently pursuing the Ph.D. degree at the Department of Automation, Shanghai Jiao Tong University, Shanghai, China. His current research interests include microservice deployment, intelligent manufacturing.
\end{IEEEbiography}

\begin{IEEEbiography}[{\includegraphics[width=1in,height=1.25in,clip,keepaspectratio]{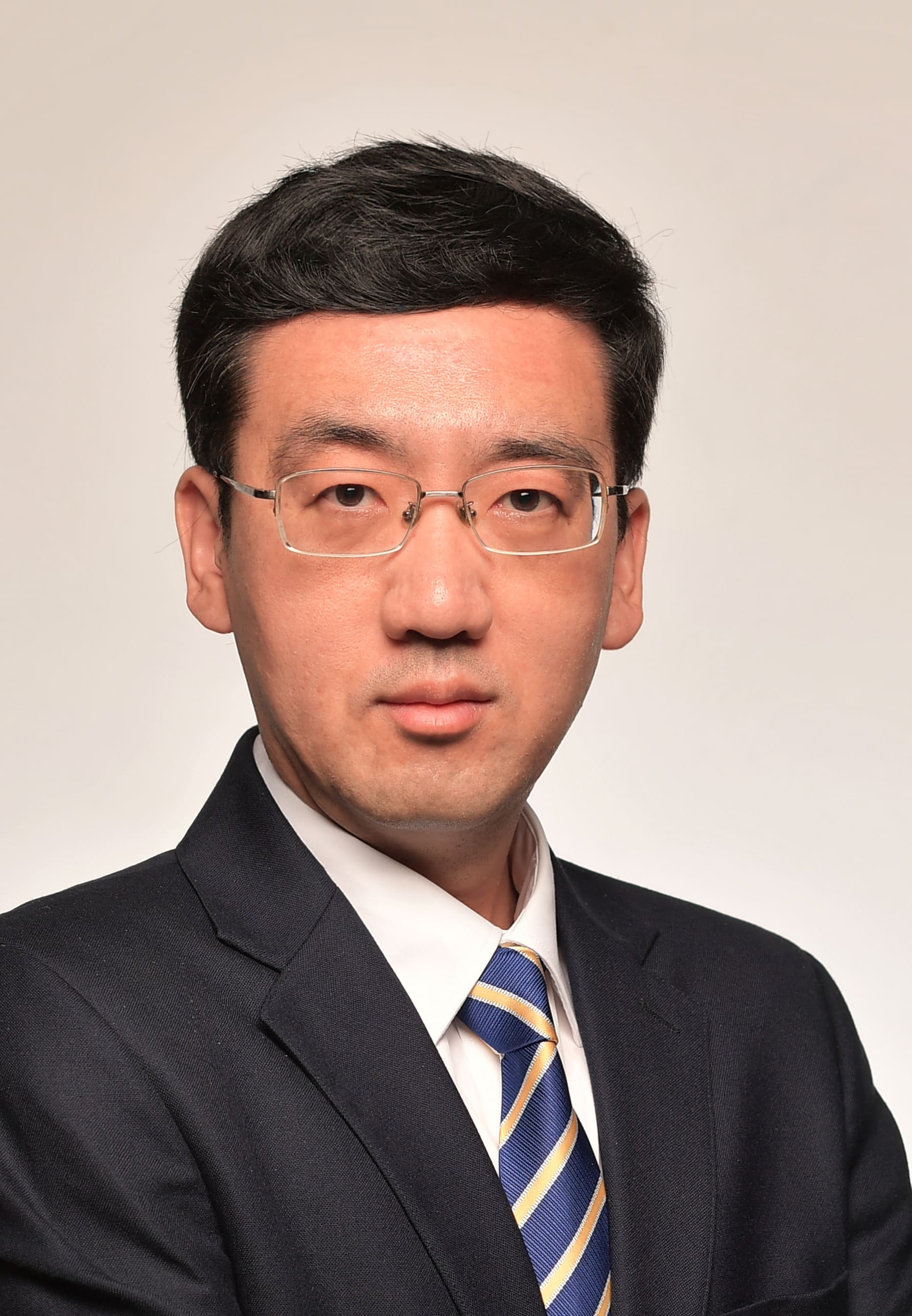}}]{Bo Yang}
	(Senior Member, IEEE) received the Ph.D. degree in electrical engineering from the City University of Hong Kong, Hong Kong, in 2009. 

	He is currently a Full Professor with Shanghai Jiao Tong University, Shanghai, China. Prior to joining Shanghai Jiao Tong University in 2010, he was a Post-Doctoral Researcher with the KTH Royal Institute of Technology, Stockholm, Sweden, from 2009 to 2010, and a Visiting Scholar with the Polytechnic Institute of New York University in 2007. 
	His research interests include game theoretical analysis and optimization of energy networks and wireless networks.
	He is on the Editorial Board of Digital Signal Processing and in TPC of several international conferences. He has been the Principle Investigator in several research projects, including the NSFC Key Project. He was a recipient of the Ministry of Education Natural Science Award 2016, the Shanghai Technological Invention Award 2017, the Shanghai Rising-Star Program 2015, and the SMC-Excellent Young Faculty Award by Shanghai Jiao Tong University.
\end{IEEEbiography}

\begin{IEEEbiography}[{\includegraphics[width=1in,height=1.25in,clip,keepaspectratio]{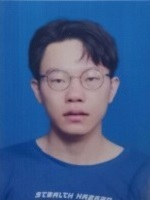}}]{Yu Wu}
	received the B.Eng. degree at the Department of Automation, Harbin Engineering University, Harbin, China, in 2019. He is currently pursuing the Ph.D. degree at the Department of Automation, Shanghai Jiao Tong University, Shanghai, China. His current research interests include edge computing, industrial Internet of Things, and machine learning for wireless networks.
\end{IEEEbiography}

\begin{IEEEbiography}[{\includegraphics[width=1in,height=1.25in,clip,keepaspectratio]{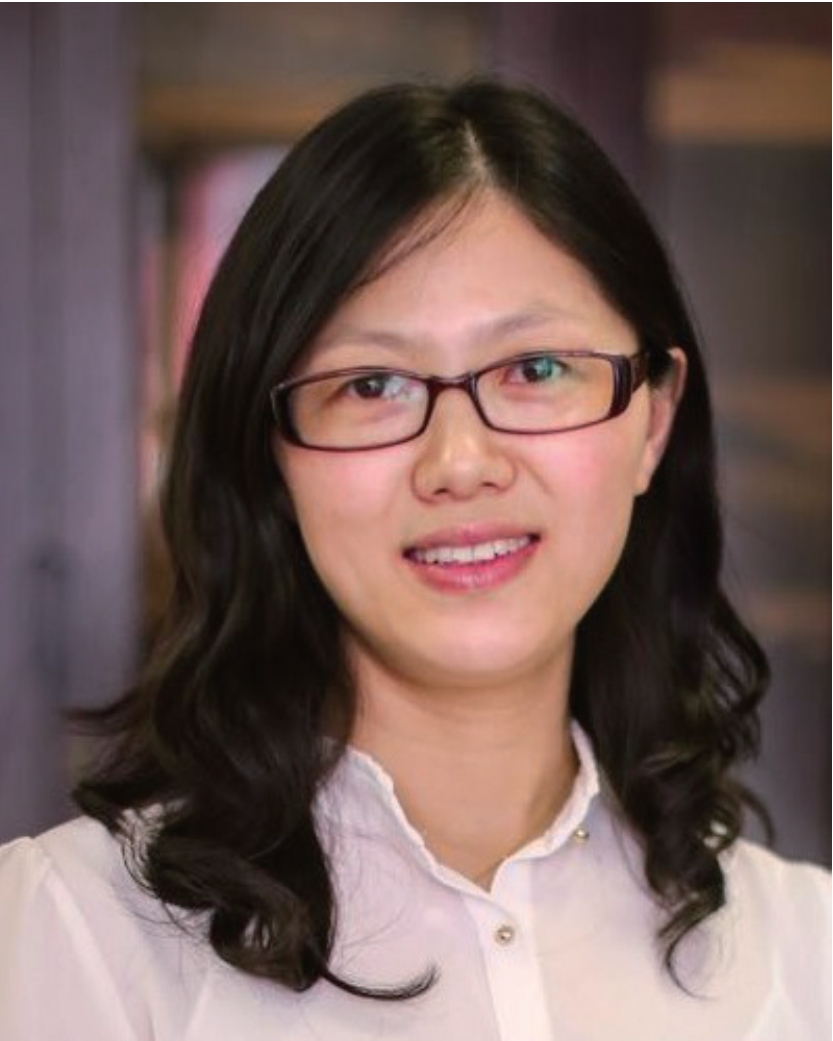}}]{Cailian Chen} 
	Cailian Chen (S'03-M'06) received the B. Eng. and M. Eng. degrees in Automatic Control from Yanshan University, P. R. China in 2000 and 2002, respectively, and the Ph.D. degree in Control and Systems from City University of Hong Kong, Hong Kong SAR in 2006. She has been with the Department of Automation, Shanghai Jiao Tong University since 2008. She is now a Distinguished Professor. 

	Prof. Chen’s research interests include industrial wireless networks and computational intelligence, and Internet of Vehicles. She has authored 3 research monographs and over 100 referred international journal papers. She is the inventor of more than 30 patents. Dr. Chen received the prestigious "IEEE Transactions on Fuzzy Systems Outstanding Paper Award" in 2008, and 5 conference best paper awards. She won the Second Prize of National Natural Science Award from the State Council of China in 2018, First Prize of Natural Science Award from The Ministry of Education of China in 2006 and 2016, respectively, and First Prize of Technological Invention of Shanghai Municipal, China in 2017. She was honored “National Outstanding Young Researcher” by NSF of China in 2020 and “Changjiang Young Scholar” in 2015.
\end{IEEEbiography}

\begin{IEEEbiography}[{\includegraphics[width=1in,height=1.25in,clip,keepaspectratio]{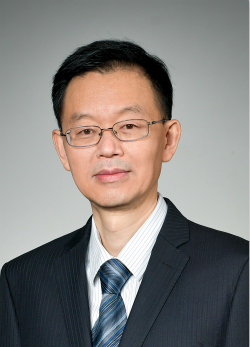}}]{Xinping Guan} 
	(Fellow, IEEE) received the B.Sc. degree in mathematics from Harbin Normal University, Harbin, China, in 1986, and the Ph.D. degree in control science and engineering from Harbin Institute of Technology, Harbin, China, in 1999.

	He is currently the Chair Professor of Shanghai Jiao Tong University, Shanghai, China, where he is the Dean of School of Electronic, Information and Electrical Engineering and the Director of the Key Laboratory of Systems Control and Information Processing, Ministry of Education of China. Before that, he was the Executive Director of Office of Research Management, Shanghai Jiao Tong University, a Full Professor, and Dean of Electrical Engineering, Yanshan University, Qinhuangdao, China. As a Principal Investigator, he has finished/been working on more than 20 national key projects. He is the leader of the prestigious Innovative Research Team of the National Natural Science Foundation of China. He is an Executive Committee Member of Chinese Automation Association Council and the Chinese Artificial Intelligence Association Council. He has authored or coauthored five research monographs, more than 200 papers in IEEE transactions and other peer-reviewed journals, and numerous conference papers. His current research interests include industrial network systems, smart manufacturing, and underwater networks.

	Dr. Guan received the Second Prize of the National Natural Science Award of China in both 2008 and 2018, the First Prize of Natural Science Award from the Ministry of Education of China in both 2006 and 2016. He was a recipient of the IEEE Transactions on Fuzzy Systems Outstanding Paper Award in 2008. He is a National Outstanding Youth honored by NSF of China, and Changjiang Scholar's by the Ministry of Education of China and State-Level Scholar of New Century Bai Qianwan Talent Program of China.
\end{IEEEbiography}

% that's all folks
\end{document}